\documentclass[11pt]{article}

\usepackage{graphicx} 
\usepackage[margin=1in]{geometry}
\usepackage{array}
\usepackage{amsthm, amssymb}
\usepackage{aliascnt}
\usepackage{amsmath}
\usepackage{amsfonts}
\usepackage{booktabs}
\usepackage{hyperref}
    \hypersetup{
        linktocpage=true,
        colorlinks=true,
        linkcolor=blue,
        citecolor=blue,
        urlcolor=blue,
    }
\usepackage{cleveref}
\usepackage[dvipsnames]{xcolor}
\usepackage[numbers]{natbib}
\theoremstyle{plain}
\newtheorem{theorem}{Theorem}[section]
\crefname{theorem}{theorem}{theorems}
\Crefname{theorem}{Theorem}{Theorems}

\newaliascnt{proposition}{theorem}

\aliascntresetthe{proposition}
\crefname{proposition}{proposition}{propositions}
\Crefname{proposition}{Proposition}{Propositions}

\newaliascnt{lemma}{theorem}
\newtheorem{lemma}[lemma]{Lemma}
\aliascntresetthe{lemma}
\crefname{lemma}{lemma}{lemmas}
\Crefname{lemma}{Lemma}{Lemmas}

\newaliascnt{conjecture}{theorem}

\aliascntresetthe{conjecture}
\crefname{conjecture}{conjecture}{conjectures}
\Crefname{conjecture}{Conjecture}{Conjectures}

\newaliascnt{corollary}{theorem}
\newtheorem{corollary}[corollary]{Corollary}
\aliascntresetthe{corollary}
\crefname{corollary}{corollary}{corollaries}
\Crefname{corollary}{Corollary}{Corollaries}

\newaliascnt{claim}{theorem}

\aliascntresetthe{claim}
\crefname{claim}{claim}{claims}
\Crefname{claim}{Claim}{Claims}

\newaliascnt{subclaim}{theorem}

\aliascntresetthe{subclaim}
\crefname{subclaim}{subclaim}{subclaims}
\Crefname{subclaim}{Subclaim}{Subclaims}

\theoremstyle{definition}
\newaliascnt{definition}{theorem}
\newtheorem{definition}[definition]{Definition}
\aliascntresetthe{definition}
\crefname{definition}{definition}{definitions}
\Crefname{definition}{Definition}{Definitions}

\newaliascnt{example}{theorem}
\newtheorem{example}[example]{Example}
\aliascntresetthe{example}
\crefname{example}{example}{examples}
\Crefname{example}{Example}{Examples}

\crefname{problem}{problem}{problems}
\Crefname{problem}{Problem}{Problems}

\theoremstyle{remark}
\newaliascnt{remark}{theorem}

\aliascntresetthe{remark}
\crefname{remark}{remark}{remarks}
\Crefname{remark}{Remark}{Remarks}

\newaliascnt{outline}{theorem}

\aliascntresetthe{outline}
\crefname{outline}{outline}{outlines}
\Crefname{outline}{Outline}{Outlines}

\newcommand{\R}{\mathbb R}
\newcommand{\E}{\mathbb E}
\newcommand{\N}{\mathbb N}

\newcommand{\cF}{\mathcal F}

\newcommand{\cD}{\mathcal D}

\newcommand{\cH}{\mathcal H}

\newcommand{\vareps}{\varepsilon}

\newcommand{\Cal}{\textup{Cal}}

\newcommand{\Ber}{\textup{Ber}}
\newcommand{\smCE}{\textup{smCE}}

\newcommand{\Err}{\operatorname{Err}}
\newcommand{\Adv}{\operatorname{Adv}}

\newcommand{\ind}{\mathbf{1}}

\newcommand{\SCDL}{\textup{SCDL}}

\allowdisplaybreaks[4]

\newcolumntype{P}[1]{>{\raggedright\arraybackslash}p{#1}}

\newcommand{\authorblock}[2]{%
  \parbox[t]{0.43\textwidth}{%
    \centering
    #1\\[-0.1em]
    {\small #2}
  }%
}

\title{Truthful Calibration Measures for Sequential Prediction}
\hypersetup{
  pdftitle={Truthful Calibration Measures for Sequential Prediction},
  pdfauthor={Anagha Gokul, Jason Hartline, Lunjia Hu, Jonathan Ullman, and Yifan Wu}
}

\author{%
  \makebox[\dimexpr\textwidth-2\tabcolsep\relax][c]{%
    \authorblock
      {Anagha Gokul\thanks{Khoury College of Computer Sciences. Email:
      \texttt{gokul.a@northeastern.edu}.}}
      {Northeastern University}
    \hfill
    \authorblock
      {Jason Hartline\thanks{Department of Computer Science. Email:
      \texttt{hartline@northwestern.edu}.}}
      {Northwestern University}
  }
  \\[1.5em]
  \makebox[\dimexpr\textwidth-2\tabcolsep\relax][c]{%
    \authorblock
      {Lunjia Hu\thanks{Khoury College of Computer Sciences. Email:
      \texttt{lunjia@alumni.stanford.edu}.}}
      {Northeastern University}
    \hfill
    \authorblock
      {Jonathan Ullman\thanks{Khoury College of Computer Sciences. Email: \texttt{j.ullman@northeastern.edu}.
      Supported by NSF awards CNS-2232692 and CNS-2247484.}}
      {Northeastern University}
  }
  \\[1.5em]
  \makebox[\dimexpr\textwidth-2\tabcolsep\relax][c]{%
    \authorblock
      {Yifan Wu\thanks{Microsoft Research New England. Email:
      \texttt{yifan.wu2357@gmail.com}.}}
      {Microsoft Research}
  }
}

\date{}

\begin{document}

\maketitle
\begin{abstract}
Calibration requires probabilistic reports to be conditionally unbiased and reliably interpretable as probabilities. A calibration measure assigns numerical error to miscalibrated reports. Recently, \citet{haghtalab2024truthfulness} proposed an approximately truthful calibration measure in the online setting, leaving open whether exact truthfulness is compatible with the completeness and soundness of a calibration measure.

We answer this question in the negative for sequential binary prediction. On the one hand, we show that exact truthfulness is incompatible with the completeness and soundness of a calibration measure, even for independent outcomes. On the positive side, we show that this impossibility is specific to exact truthfulness. We provide two general reductions from a base calibration measure: one produces an additively approximately truthful calibration measure, and the other produces a multiplicatively approximately truthful calibration measure. Using the latter reduction, for every $0<\vareps<1$ we construct a sound and complete calibration measure that is $\left(1+\exp\left(-\tfrac12 T^{(1-\vareps)/2}\right)\right)$-multiplicatively truthful. This improves the approximate truthfulness guarantee of \citet{haghtalab2024truthfulness}.
\end{abstract}

\newpage
\tableofcontents
\newpage
\section{Introduction}
Probabilistic prediction is increasingly important in high-stakes decision making, where predictions are used to quantify uncertainty. \citet{dawid} proposed calibration as a desirable property of probabilistic reports. Calibration requires predictions to be conditionally unbiased, thus interpretable at face value. For example, if a weather predictor reports a $70\%$ chance of rain on many days, then rain should occur on $70\%$ of those days.
Formally, for every report value $p$, conditional on report $p$, the event should occur with frequency approximately $p$.

A \emph{calibration measure} assigns an error to a sequence of probability reports and outcomes to quantify the level of miscalibration.
Classical examples include the Expected Calibration Error (ECE) and its smooth variants \citep{guo2017calibration,smooth}. Concretely, given $T$ samples $(p_t,y_t)_{t\in [T]}$ of binary probability reports, ECE calculates the averaged report bias in the sample. For every report value $p$, write $n_p$ as the number of times that $p$ is reported, and $\hat p:=\frac1{n_p}\sum_{t:p_t=p}y_t$ as the empirical frequency conditional on report value $p$. The report bias is defined as the absolute distance $|\hat p-p|$ between the conditional empirical frequency $\hat p$ and the reported value $p$. ECE is the average bias weighted by the fraction $n_p/T$ of samples assigned report $p$:
\[
\operatorname{ECE}_T(p_{1:T},y)
:=
\sum_{p\in \mathcal R}
\frac{n_p}{T}
\left|\hat p-p\right|,
\]
where $\mathcal R$ is the set of values reported by the predictor.

However, ECE does not incentivize truthful reporting even when the predictor knows the ground-truth data-generating distribution. \Cref{example: intro non truthful of ece} shows that while a truthful predictor suffers $\Omega\left(1/\sqrt T\right)$ sampling error, a strategic predictor can achieve ECE$=0$ by misreporting.
\begin{example}[\protect\citep{haghtalab2024truthfulness}]
\label{example: intro non truthful of ece}
    Split the time horizon into three equal blocks.  In the first block, outcomes are independent fair coin flips.  In the second block, every outcome is $0$.  In the third block, every outcome is $1$.  The truthful predictor reports $1/2$ in the first block, $0$ in the second block, and $1$ in the third block.
    The deterministic blocks are then perfectly calibrated.  The only contribution to ECE comes from the $\Theta\left(1/\sqrt T\right)$ sampling error in the first block.

    A strategic predictor may look perfectly calibrated by using the later deterministic blocks to correct this sampling error.  It also reports $1/2$ throughout the first block.  After seeing that block, if the conditional empirical frequency at report $1/2$ is above $1/2$, the predictor reports $1/2$ on enough later zero outcomes to balance them, and similarly otherwise.  At the end, the report $1/2$ has exactly $1/2$ of ones.  Thus the strategic predictor obtains ECE$=0$.
\end{example}

A recent line of work studies whether calibration measures can be made truthful. \citet{haghtalab2024truthfulness} initiated this direction in the sequential prediction setting by proposing an approximately truthful calibration measure. In the batch setting, \citet{hartline2026perfectly} and \citet{lu2026truthful} show that perfectly truthful calibration measures exist. These results leave a natural gap: whether exact truthfulness is possible in the sequential prediction problem.

Our paper answers this question negatively. We show that exact truthfulness is incompatible with the completeness and soundness requirements of a calibration measure in the sequential prediction problem. On the positive side, we reduce the construction of an approximately truthful calibration measure to finding a sound and complete base calibration measure that is uniformly complete.

\subsection{Our Contributions}
Our first result shows that exact truthfulness is impossible for sequential calibration. The main idea is that completeness and soundness require a calibration measure to  assign lower expected error to calibrated product samples than to miscalibrated ones, when the sample size is sufficiently large. In the sequential setting, that comparison can occur on a rare history. With exponentially small but nonzero probability, the truthful transcript can look very miscalibrated, while an alternative transcript induced by a strategic report looks more calibrated. When this happens, the strategic predictor can lower expected calibration error by deviating to the alternative report. In \Cref{sec:product-dist-impossibility}, we formalize this intuition using a hybrid argument. Recall that completeness and soundness imply the separation of calibrated distribution from a miscalibrated one, when there are $T$ i.i.d.\ prediction outcome pairs from these distributions. A hybrid is a mixture where the first $m$ pairs are drawn from a calibrated distribution, and the later $T-m$ pairs are from a miscalibrated distribution. 
This hybrid argument finds a local profitable deviation  at sample $m$ under some realization of prefix, by enumerating $m$ over all such hybrids. 


Our second result gives a quantitative lower bound for the gap from being truthful, which is exponentially small in $T$. In \Cref{sec:quantitative-bounds}, we lower bound the largest online gain from strategic misreporting by an exponentially small multiple of the offline distinguishing gap between the calibrated and miscalibrated distributions. The exponential factor reflects the probability of reaching the rare prefix realization on which the local strategic deviation is useful. 

Our third result shows that the impossibility is specific to exact truthfulness. In \Cref{sec:approximate-truthfulness}, we reduce the construction of an approximately truthful calibration measure to finding a sound and complete base calibration measure that is uniformly complete. Here, uniformly complete means that the truthful error can be uniformly bounded by the sampling error. The reduction proceeds in two stages. Subtracting a vanishing sampling-error threshold first gives an additively approximately truthful measure. Adding a vanishing squared-loss term to this measure then strengthens the guarantee to $(1+o(1))$-multiplicative truthfulness. Both stages preserve the completeness and soundness of the base measure. Applying the multiplicative reduction with smooth calibration error as the base measure, using the concentration bound in \Cref{sec:smCE-conc}, gives a $\left(1+\exp\left(-\tfrac12 T^{(1-\vareps)/2}\right)\right)$-multiplicatively truthful calibration measure for every $0<\vareps<1$.

\subsection{Related Work}

Existing literature proposes several desiderata of calibration measures, including truthfulness, continuity, and decision-theoretic implications.  

\paragraph{Sequential calibration.}
The classical literature on sequential forecasting established asymptotically calibrated procedures, including against adversarial outcome sequences \citep{foster1998asymptotic}. Subsequent work has investigated finite-horizon rates, developing stronger lower bounds for calibration error \citep{sidestep} and studying the related distance-to-calibration objective \citep{qiao-distance,elementary}. Most recently, \citet{dagan2024breakingt23barriersequential} broke the longstanding \(T^{2/3}\) barrier for sequential calibration.

\paragraph{Truthfulness.}
We discuss truthfulness first. 
Truthfulness was introduced for calibration measures by \citet{haghtalab2024truthfulness}. Recent literature observes that existing calibration measures can reward strategic misreporting (e.g., \citep{foster2021hedging,sidestep}). \citet{haghtalab2024truthfulness} introduce a multiplicatively approximately truthful calibration measure for the sequential prediction problem.
\citet{qiao2025truthfulness} show that truthfulness is incompatible with decision-theoretic calibration guarantees. 
In the batch setting, \citet{hartline2026perfectly} show that perfect truthfulness is compatible with soundness and completeness, and \citet{lu2026truthful} extend this direction to multi-class prediction. We consider the sequential prediction problem. Our results identify the incompatibility of exact truthfulness with the completeness and soundness of calibration measures, and improve the approximation in \citet{haghtalab2024truthfulness}.

\paragraph{Continuity.}
Continuity is another natural desideratum. Continuity requires that a calibration measure should be robust to small perturbations of the reported probabilities, with the outcomes held fixed. This rules out ECE and binned ECE, where an empirically calibrated sequence can acquire large error after arbitrarily small perturbations of the reports because the perturbation splits previously identical reports into separate buckets. The distance-to-calibration framework of \citet{utc} and smooth calibration error \citep{smooth} propose continuous calibration measures.

\paragraph{Decision-theoretic guarantees.}
A third desideratum concerns decision-theoretic guarantees for downstream decision makers. This desideratum is orthogonal to the scope of our paper, but we include it here for reference.
A calibration measure should upper bound the swap regret of any downstream decision maker. Equivalently, calibration guarantees that decision makers who best respond to the report do not lose payoff relative to the same predictor after calibration. \citet{kleinberg2023u}
initiate this decision-theoretic perspective and propose relaxations based on no-external-regret guarantees for downstream decision makers, which are strictly weaker than swap regret. \citet{hu2024predict} consider the equivalent no-swap-regret guarantee. They propose calibration decision loss, the maximum swap regret over payoff-bounded downstream decision tasks, and show that vanishing error gives simultaneous no-regret guarantees for all such tasks. In subsequent work, \citet{hartline2025smooth} show that smooth or
distance-based calibration can be post-processed to obtain small ECE and calibration decision loss. \citet{qiao2025truthfulness} study how decision-theoretic guarantees interact with truthfulness and construct a calibration measure that is both approximately truthful and decision-theoretic under smooth distributions. More recently, \citet{bairaktari2026testableactionablecalibrationswap}
introduce soft-binned calibration decision loss ($\SCDL$), which is fully actionable for unrestricted swap regret under a specified rounded response rule and is testable at a nearly optimal rate.

\paragraph{Independent work on impossibility of truthfulness.} 
Independent of our work, two recent manuscripts by Dewasurendra also showed impossibility results for designing perfectly truthful, sound, and complete calibration measures for sequential prediction \citep{Dewasurendra2026characterizingperfect,Dewasurendra2026noperfectly}. 
The truthfulness, soundness, and completeness notions considered in their work differ subtly from our main impossibility result in \Cref{sec:product-dist-impossibility}. In particular, they require the stronger notion of truthfulness that allows the true outcomes $y_1,\ldots,y_T$ to follow an arbitrary joint distribution on $\{0,1\}^T$. Our main impossibility result in \Cref{sec:product-dist-impossibility} only requires truthfulness to hold when the outcomes $y_1,\ldots,y_T$ follow a product distribution. The soundness and completeness notions considered by Dewasurendra are also different and are aligned with the setting of \citep{haghtalab2024truthfulness}. Our impossibility results in \Cref{sec:weak-soundness-appendix} uses truthfulness, soundness, and completeness notions closer to the work of Dewasurendra.
\section{Technical Overview}
\label{sec:technical-overview}

Before giving the formal definitions, we summarize the proof ideas behind
the three main parts of the paper: the exact impossibility, the
quantitative lower bounds obtained from that proof, and the approximate
construction.  The notation used here is made formal in
\Cref{sec:preliminaries}; in particular, $\delta_z$ denotes the point mass
at $z$ and $x_+=\max\{x,0\}$.

\subsection{Impossibility of Truthfulness.}
The source of the impossibility is that exact truthfulness asks the predictor to report
the next conditional mean even on rare histories where the transcript so
far looks badly miscalibrated. Completeness and soundness, however,
reward transcripts that look calibrated. On such histories, a strategic
predictor may prefer a report that repairs the accumulated imbalance.

\paragraph{Warm-Up Example: Symmetric Case} We first illustrate the
idea under a symmetry assumption: suppose the calibration measure depends
only on the empirical distribution of the transcript, not on the order of
the time steps.  Let $T=2n$.  In the first $n$ rounds the ground-truth
data-generating distribution is Bernoulli with mean $1/2$, so the
truthful report is $1/2$.  In the remaining $n$ rounds the
outcome is deterministically $1$, so the truthful report is $1$.  Now condition
on the rare history where every outcome in the first block is realized to be $0$, which happens with probability $2^{-n}$.
On that history, the truthful history of prediction-outcome pairs $(r, y)$ contains $n$ copies of
$(1/2,0)$ followed by $n$ copies of $(1,1)$.

A strategic predictor can behave truthfully everywhere except on this
rare history.  If the first block is all zeros, it keeps reporting $1/2$
during the deterministic-one rounds instead of switching to $1$.  Then the
transcript contains $n$ copies of $(1/2,0)$ and $n$ copies of
$(1/2,1)$, which looks perfectly calibrated at report $1/2$. Completeness
and soundness then favor the calibrated history produced by the strategic
predictor. Thus, the strategic predictor has a lower error on this rare
history and achieves the same error as the truthful predictor everywhere else.  Truthful
reporting is then not an optimal strategy.

\paragraph{Hybrid Argument: General Case} The full result removes the
symmetry assumption of the calibration measure using a hybrid argument.
We start with two offline one-sample distributions over
prediction-outcome pairs, one calibrated and one miscalibrated.
Completeness and soundness imply that, for all sufficiently large $T$,
their $T$-fold product distributions are separated in expectation.  The
hybrid argument then changes the transcript one coordinate at a time, so
some adjacent pair of hybrids must carry a positive part of the endpoint
gap. That coordinate becomes the time step where the strategic online
predictor benefits from a deviating report.

For the overview, take the fair-coin pair
\[
\cD=(1/2)\delta_{(1/2,0)}+(1/2)\delta_{(1/2,1)}
\qquad\text{and}\qquad
\cD'=(1/2)\delta_{(1/2,0)}+(1/2)\delta_{(1,1)}.
\]
The distribution $\cD$ is calibrated. The distribution $\cD'$ is miscalibrated because the report $1/2$ appears only with outcome $0$.

For all sufficiently large $T$, the calibration measure gives smaller expected error to $\cD^T$ than to $(\cD')^T$. To connect these two histories, we define $T+1$ hybrids, which are random sequences of prediction-outcome pairs. Hybrid $j\in\{0,\ldots,T\}$ is constructed such that:
\begin{itemize}
\item the first $j$ samples are drawn from the miscalibrated distribution $\cD'$;
\item the later $T-j$ samples are drawn from the calibrated distribution $\cD$.
\end{itemize}
The first hybrid, with switching point $0$, is fully calibrated, and the
last hybrid, with switching point $T$, is fully miscalibrated. Hence some
adjacent pair of hybrids must account for a positive part of the endpoint
gap. These two hybrids share the same prefix before time $m$, where each
pair is either $(1/2,0)$ or $(1,1)$, and the same calibrated
Bernoulli$(1/2)$ tail after time $m$. At time $m$, they agree on the
zero-outcome pair $(1/2,0)$ and differ only when the outcome is $1$: the
calibrated version reports $1/2$ with $(1/2,1)$, while the miscalibrated
version reports $1$ with $(1,1)$. Since this adjacent gap is positive on
average over the common prefix, there is a fixed prefix realization of
length $m-1$ for which the same local preference holds. This fixed prefix
is the rare history where the predictor benefits from misreporting:
\begin{itemize}
\item it is generated by $m-1$ i.i.d.\ draws from \(\cD'\), so the reports are either $1/2$ or $1$, which specifies the construction of the data-generating distribution;
\item conditioning on the prefix, if the \(m\)th outcome is deterministically \(1\) and the later rounds are drawn from the calibrated Bernoulli\((1/2)\) tail, the expected calibration error is lower when time \(m\) reports \(1/2\) than when it reports the truthful value \(1\).
\end{itemize}

\paragraph{Lower Bounds for Approximate Truthfulness}
The same proof gives a quantitative statement.  Let
$\Delta_T(1/2)$ be the offline distinguishing gap between $(\cD')^T$ and
$\cD^T$, and let $\operatorname{Adv}^{\mathrm{prod}}_T(\Cal)$ be the
largest additive gain achievable by a non-anticipating misreporting
strategy under a product data-generating distribution.  In the fair-coin case, the hybrid
proof yields
\[
\operatorname{Adv}^{\mathrm{prod}}_T(\Cal)
\ge
\frac{2^{2-T}}{T}\Delta_T(1/2).
\]
The factor is exponentially small because the online strategy must wait
for the particular prefix selected by the hybrid argument.

The appendix generalizes the impossibility of truthfulness under the weak
completeness and soundness requirements in
\citet{haghtalab2024truthfulness}. That work requires a calibration measure to distinguish the calibration from miscalibration of only constant predictors. This weak completeness
asks that a constant predictor who reports the true Bernoulli mean
has small expected error, while soundness only asks that a constant
report separated from that mean has large expected error. Our main
definitions are stronger: they require the measure to distinguish
calibrated from miscalibrated one-sample distributions for arbitrary
predictors. A comparison lemma shows
that, under truthfulness, the expected error of a constant misreport $q$
under $\Ber(p)^T$ is at most $p(1-q)/(q(1-p))$ times the error of the
truthful constant report $p$.  This bound is independent of $T$, and
therefore shows the impossibility of truthfulness under the weak completeness and soundness requirement.

\subsection{Approximate truthfulness.}
We reduce the construction of an approximately truthful calibration measure to finding a sound and complete base calibration measure that is uniformly complete. 

\paragraph{Additively Approximate Truthfulness} Given such a base measure, the additive truthfulness construction subtracts a sampling-error threshold $\tau_T$:
\[
\Cal_T^{\mathrm{add}}(r,y)
=
\bigl(\Err_T(r,y)-\tau_T\bigr)_+.
\]
We require the base measure $\Err$ to be uniformly complete: for any ground truth distribution, the tail after extracting a sampling error $\tau_T$ can be bounded. The additively approximate truthfulness follows immediately from the fact that the error measure is non-negative. 

\paragraph{Multiplicatively Approximate Truthfulness} The multiplicative approximation adds a small normalized squared-loss term to the additive one:
\[
\Cal_T^{\mathrm{mult}}(r,y)
=
\Cal_T^{\mathrm{add}}(r,y)
+
\lambda_T\cdot\frac1T\sum_{t=1}^T(r_t-y_t)^2.
\]
The squared-loss term is proper round by round and gives a
universal lower bound on the expected error of every strategy at the
scale of the intrinsic variance
$\sum_t\E[p_t^\ast(1-p_t^\ast)]/T$.  Consequently,
$\Cal_T^{\mathrm{mult}}$ is
$1+\eta_T/\lambda_T$ multiplicatively truthful.


\section{Preliminaries}
\label{sec:preliminaries}

We study binary sequential prediction over a fixed horizon $T\in\N$.
Write $[T]=\{1,\ldots,T\}$ and let $\Ber(p)$ denote the Bernoulli
distribution with mean $p$.  A realization is denoted
$Y=(Y_1,\ldots,Y_T)\in\{0,1\}^T$.  We write $\delta_z$ for the point mass
at $z$ and $x_+=\max\{x,0\}$ for the positive part of a real number.

\subsection{Sequential Prediction}

A data-generating distribution is a joint distribution over the outcome
sequence \(Y=(Y_1,\ldots,Y_T)\).  

At each time $t$, a predictor observes the past outcomes
$Y_{<t}=(Y_1,\ldots,Y_{t-1})$ and then reports a probability
$r_t\in[0,1]$.  The predictor is assumed to know the ground-truth
data-generating distribution before play begins.  We use $r$ for an
arbitrary report sequence and $p^\ast$ for the truthful sequence.

The first object we need is the set of admissible reporting rules.  The
non-anticipation restriction below is what makes the problem sequential:
reports may depend on the past and on the known ground truth, but not on
future realized outcomes.

\begin{definition}[Prediction strategy]
A prediction strategy, relative to a fixed known ground-truth
data-generating distribution, is a sequence of maps
$S_t:\{0,1\}^{t-1}\to[0,1]$.  Given an outcome sequence $Y$, it produces
the predictable report vector
\[
S(Y)=\bigl(S_1,S_2(Y_1),\ldots,S_T(Y_{<T})\bigr).
\]
The maps \(S_t\) may depend on this data-generating distribution, but
not on future realized outcomes.  We suppress this dependence from the
notation.
\end{definition}

Two classes of data-generating distributions are used throughout.
In the \emph{product-distribution} case, there are numbers
$p_1^\ast,\ldots,p_T^\ast\in[0,1]$ and the outcomes are generated
independently as $Y_t\sim\Ber(p_t^\ast)$.  In the \emph{correlated}
case, the ground truth is an arbitrary joint
distribution $\Pi$ on $\{0,1\}^T$, and the truthful report is the
conditional mean
\[
p_t^\ast(Y_{<t})=\E_\Pi[Y_t\mid Y_{<t}].
\]
Thus $p_t^\ast$ is predictable with respect to the outcome history.

\subsection{Calibration Measures, Completeness and Soundness}

A calibration measure assigns
a non-negative error to a transcript of probability reports and realized
outcomes; all truthfulness and soundness requirements are constraints on
this map.

\begin{definition}[Calibration measure]
A calibration measure is a map
\[
\Cal_T:[0,1]^T\times\{0,1\}^T\to\R_{\ge 0}.
\]
Smaller values of $\Cal_T$ represent better calibration.
\end{definition}

When $Z_t=(r_t,y_t)\in[0,1]\times\{0,1\}$ and
$Z=(Z_1,\ldots,Z_T)$, we also write
\[
\Cal_T(Z)=\Cal_T(r_1,\ldots,r_T;y_1,\ldots,y_T).
\]
For a distribution $\cD$ over prediction-outcome pairs, $\cD^T$ denotes
the distribution of $T$ independent draws from $\cD$.  A sequence
distribution is a probability distribution $\Lambda$ on
$([0,1]\times\{0,1\})^T$.

Completeness and soundness distinguish calibrated data from
miscalibrated data.  The following definition fixes what calibrated means
for a one-sample distribution over a report and an outcome.

\begin{definition}[Calibrated distribution]
A distribution $\cD$ over $[0,1]\times\{0,1\}$ is calibrated if, for
$(R,Y)\sim\cD$,
\[
\E[Y\mid R]=R
\qquad\text{almost surely}.
\]
\end{definition}

Completeness and soundness formalize the statistical meaning of a
calibration measure. Following \citet{hartline2026perfectly}, we use the
sample-access asymptotic definitions: calibrated product distributions
should have vanishing expected error, while miscalibrated product
distributions should have error bounded away from zero in the limit.
Since our measure $\Cal_T$ is written directly as a function of a
length-$T$ transcript, evaluating it on $Z\sim\cD^T$ is the same
sample-access experiment.

Completeness is the vanishing-error side of the statistical guarantee:
if the reports are genuinely calibrated, then the expected calibration
error should disappear as the sample size grows.

\begin{definition}[Completeness]
\label{def:completeness}
A sequence of calibration measures
$\Cal=(\Cal_T)_{T\ge1}$ is complete if, for every calibrated distribution
$\cD$ over $[0,1]\times\{0,1\}$,
\[
\lim_{T\to\infty}
\E_{Z\sim\cD^T}[\Cal_T(Z)]
=0.
\]
\end{definition}

Soundness is the separating side of the same guarantee: a fixed
miscalibrated distribution should retain a positive expected error in
the large-sample limit.

\begin{definition}[Soundness]
\label{def:soundness}
A sequence of calibration measures
$\Cal=(\Cal_T)_{T\ge1}$ is sound if, for every miscalibrated distribution
$\cD$ over $[0,1]\times\{0,1\}$,
\[
\liminf_{T\to\infty}
\E_{Z\sim\cD^T}[\Cal_T(Z)]
>0.
\]
\end{definition}

Together, completeness and soundness imply that every fixed calibrated
one-sample distribution is eventually separated in expected error from
every fixed miscalibrated one-sample distribution.  The appendix also
discusses a weaker constant-predictor notion from
\citet{haghtalab2024truthfulness}, where one only asks calibrated
constant reports to have small expected error and separated constant
reports to have larger expected error.  The same incompatibility with
truthfulness holds for this weaker notion; see
\Cref{sec:weak-soundness-appendix}.

The normalized smooth calibration error \citep{smooth, utc} used later is
\begin{equation}
    \label{eq: smooth def}
\smCE_T(r,y)
:=
\frac1T\sup_{f\in\cF}
\left|
\sum_{t=1}^T f(r_t)(y_t-r_t)
\right|,
\end{equation}
where $\cF$ is the class of $1$-Lipschitz functions
$f:[0,1]\to[-1,1]$.

\subsection{Truthfulness}

We first define the truthful strategy, which reports the conditional
probability of the next outcome under the ground truth.  

\begin{definition}[Truthful strategy]
Given a ground-truth data-generating distribution \(\Pi\), a prediction
strategy \(S^\ast\) is truthful if, for every \(t\) and every history
\(y_{<t}\) with \(\Pr_\Pi[Y_{<t}=y_{<t}]>0\),
\[
S_t^\ast(y_{<t})
=
\E_\Pi[Y_t\mid Y_{<t}=y_{<t}].
\]
Values on histories with zero probability are immaterial.  We write
\(p^\ast=S^\ast(Y)\) for the realized truthful report sequence.
\end{definition}

Truthfulness of a calibration measure \citep{haghtalab2024truthfulness} requires the truthful strategy to minimize expected calibration
error for every joint data-generating distribution. 

\begin{definition}[Truthfulness]
A calibration measure $\Cal_T$ is truthful if, for every joint
distribution $\Pi$ on $\{0,1\}^T$, with truthful reports
\[
p_t^\ast(Y_{<t})=\E_\Pi[Y_t\mid Y_{<t}],
\]
and every prediction strategy $S$ that may depend on $\Pi$, we have
\[
\E_\Pi[\Cal_T(p^\ast,Y)]
\le
\E_\Pi[\Cal_T(S(Y),Y)].
\]
\end{definition}

We also use approximate versions of this truthfulness requirement.  The
additive version is the scale used by the quantitative lower bound in
\Cref{sec:quantitative-bounds}.

\begin{definition}[Additive approximate truthfulness]
\label{def:additive-truthfulness}
A calibration measure $\Cal_T$ is $\beta_T$-additively truthful for
product distributions if, for every product distribution and every
prediction strategy $S$ that may depend on that product distribution,
\[
\E[\Cal_T(p^\ast,Y)]
\le
\E[\Cal_T(S(Y),Y)]+\beta_T.
\]
The version for arbitrary joint distributions requires the same
inequality for every joint distribution of $Y$ and every prediction
strategy that may depend on that joint distribution.
\end{definition}

The second stage of the positive construction gives the stronger
multiplicative guarantee.  We define that version separately because both
notions are used in \Cref{sec:approximate-truthfulness}.

\begin{definition}[Multiplicative truthfulness]
\label{def:multiplicative-truthfulness}
A calibration measure $\Cal_T$ is $\alpha_T$-multiplicatively truthful
for product distributions if, for every product distribution and every
prediction strategy $S$ that may depend on that product distribution,
\[
\E[\Cal_T(p^\ast,Y)]
\le
\alpha_T\E[\Cal_T(S(Y),Y)].
\]
The version for arbitrary joint distributions requires the same
inequality for every joint distribution of $Y$ and every prediction
strategy that may depend on that joint distribution.
\end{definition}

\section{Product-Distribution Impossibility}
\label{sec:product-dist-impossibility}

We first prove that exact truthful reporting can fail even under product
data-generating distributions. Even a very weak finite-sample
discrimination requirement is enough: if a calibration measure assigns
smaller expected error to an i.i.d.\ calibrated distribution than to a
simple i.i.d.\ miscalibrated distribution, then the measure creates an
incentive for a strategic predictor to misreport. The proof is a hybrid
argument. The offline separation between the two distributions can be
localized to a single round where the truthful report is $1,$ but the
strategic predictor prefers the report $q.$ We then use this local
preference to construct an online prediction problem with a product
data-generating distribution. \\


The two product distributions in the theorem below represent the weakest
natural test that distinguishes a calibrated report from a miscalibrated
one at a single report value. Under
\[\cD_q=(1-q)\delta_{(q, 0)}+q\delta_{(q, 1)},\] the predictor always reports $q,$ and the outcome has conditional mean $q,$ so this distribution is calibrated. In contrast, under
\[
\cD'_q=(1-q)\delta_{(q,0)}+q\delta_{(1,1)},\] the report $q$ appears only when the outcome is $0,$ so the conditional mean of the outcome, given report $q$, is $0$ rather than $q.$ Thus $\cD'_q$ is miscalibrated in this elementary way.

The finite-sample theorem below is the core of the impossibility.  Any
measure that distinguishes this calibrated pair from the miscalibrated
pair already creates a profitable online deviation, even under
independent outcomes.

\begin{theorem}[Hybrid impossibility]
\label{thm:hybrid-impossibility}
Fix $T\in\N$ and $q\in(0,1)$.  Let
\[
\cD_q=(1-q)\delta_{(q,0)}+q\delta_{(q,1)}
\qquad\text{and}\qquad
\cD'_q=(1-q)\delta_{(q,0)}+q\delta_{(1,1)}.
\]
If a calibration measure $\Cal_T$ satisfies
\begin{equation}
\label{eq:hybrid-distinguish}
\E_{Z\sim\cD_q^T}[\Cal_T(Z)]
<
\E_{Z\sim(\cD'_q)^T}[\Cal_T(Z)],
\end{equation}
then there is a product data-generating distribution and a prediction
strategy \(S\) such that, writing \(p^\ast\) for the truthful report
sequence,
\[
\E[\Cal_T(S(Y),Y)]
<
\E[\Cal_T(p^\ast,Y)].
\]
\end{theorem}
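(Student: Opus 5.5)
We will use a hybrid argument over the switch point between the two product distributions. For $j\in\{0,\ldots,T\}$, let $H_j$ be the distribution on $([0,1]\times\{0,1\})^T$ under which $Z_1,\ldots,Z_j$ are i.i.d.\ from $\cD'_q$ and $Z_{j+1},\ldots,Z_T$ are i.i.d.\ from $\cD_q$, all independent. Write $h_j=\E_{Z\sim H_j}[\Cal_T(Z)]$. Then $H_0=\cD_q^T$ and $H_T=(\cD'_q)^T$, so \eqref{eq:hybrid-distinguish} reads $h_0<h_T$. Telescoping gives an index $m\in[T]$ with $h_{m-1}<h_m$.

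The hybrids $H_{m-1}$ and $H_m$ agree except at coordinate $m$. Condition on the prefix $Z_{<m}$, which is i.i.d.\ from $\cD'_q$ in both. At coordinate $m$, both draw outcome $0$ with probability $1-q$, and the pair is then $(q,0)$ in both. With probability $q$ the outcome is $1$, and the pair is $(q,1)$ under $H_{m-1}$ and $(1,1)$ under $H_m$. The suffix is i.i.d.\ from $\cD_q$ in both. Define, for a prefix $z_{<m}$ and a report $a\in\{q,1\}$,
\[
g(z_{<m},a)=\E_{Z_{>m}\sim\cD_q^{T-m}}\bigl[\Cal_T(z_{<m},(a,1),Z_{>m})\bigr].
\]
The $(q,0)$ branches cancel, so $h_m-h_{m-1}=q\,\E_{z_{<m}\sim(\cD'_q)^{m-1}}[g(z_{<m},1)-g(z_{<m},q)]>0$. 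Since the prefix distribution is finitely supported on $\{(q,0),(1,1)\}^{m-1}$, there is a fixed prefix $z^\circ_{<m}=((r^\circ_t,y^\circ_t))_{t<m}$ with $g(z^\circ_{<m},q)<g(z^\circ_{<m},1)$.

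Next we build the online instance as a product distribution. For $t<m$, let $Y_t\sim\Ber(r^\circ_t)$. Here $r^\circ_t\in\{q,1\}$, and the truthful report is $r^\circ_t$, since product outcomes make the conditional mean equal the marginal mean. Set $Y_m\equiv1$, so the truthful report at time $m$ is $1$. For $t>m$, let $Y_t\sim\Ber(q)$, with truthful report $q$. On the realized transcript, the suffix pairs are then distributed exactly as $\cD_q^{T-m}$.

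The strategy $S$ agrees with the truthful strategy everywhere except at time $m$. It reports $q$ at time $m$ exactly when the history $Y_{<m}$ equals $y^\circ_{<m}$, and reports $1$ otherwise. Its reports never depend on future outcomes, so $S$ is admissible. The history $Y_{<m}$ determines the truthful prefix reports, which are fixed constants $r^\circ_t$, so on the event $\{Y_{<m}=y^\circ_{<m}\}$ the prefix transcript is exactly $z^\circ_{<m}$. That event has probability $\prod_{t<m}(r^\circ_t)^{y^\circ_t}(1-r^\circ_t)^{1-y^\circ_t}$. Whenever $r^\circ_t=1$, the prefix has $y^\circ_t=1$, because $\cD'_q$ pairs report $1$ only with outcome $1$. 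Hence each factor is positive ($q$, $1-q$, or $1$), and the event has positive probability.

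Off this event the two transcripts coincide. On it, the conditional expected errors are $g(z^\circ_{<m},q)$ under $S$ and $g(z^\circ_{<m},1)$ under truthful reporting, because the time-$m$ pair is $(q,1)$ versus $(1,1)$ and the suffix is the same $\cD_q^{T-m}$ draw. Therefore $\E[\Cal_T(S(Y),Y)]<\E[\Cal_T(p^\ast,Y)]$.

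The step that needs the most care is this last one: checking that conditioning on the fixed prefix in the online instance reproduces exactly the conditional structure of the hybrids. The two points to verify are that the history $Y_{<m}$ pins down the prefix reports, and that the rare prefix has positive probability. The second uses $q\in(0,1)$ together with the fact that $(1,0)$ never occurs under $\cD'_q$.
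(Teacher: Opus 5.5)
Your proposal is correct and follows essentially the same route as the paper's proof. You use the same hybrids $\cH_j$, select a telescoping index $m$, and obtain the same conditional decomposition with the $(q,0)$ branch cancelling, so that your $g(z_{<m},a)$ coincides with the paper's $\Phi_a(z_{<m})$. You then pick a support prefix and embed it as the product instance $p_t^\ast=\hat p_t,\,1,\,q$, with a strategy that misreports $q$ only on the positive-probability prefix event.
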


\begin{proof}
For $m=0,1,\ldots,T$, define a hybrid distribution $\cH_m$ over
$T$ prediction-outcome pairs by drawing the first $m$ pairs independently
from $\cD'_q$ and the remaining $T-m$ pairs independently from $\cD_q$.
Thus $\cH_0=\cD_q^T$ and $\cH_T=(\cD'_q)^T$.  By
\eqref{eq:hybrid-distinguish}, there is some $m\in[T]$ such that
\[
\E_{Z\sim\cH_{m-1}}[\Cal_T(Z)]
<
\E_{Z\sim\cH_m}[\Cal_T(Z)].
\]

The two hybrids differ only in coordinate $m$.  In that coordinate,
$\cD_q$ and $\cD'_q$ agree on $(q,0)$ and differ only on the outcome-one
point:
\[
(q,1)\quad\text{under }\cD_q,
\qquad
(1,1)\quad\text{under }\cD'_q.
\]
Let
\[
Z_{<m}=((P_1,Y_1),\ldots,(P_{m-1},Y_{m-1}))
\]
denote the common prefix random vector of $\cH_{m-1}$ and $\cH_m$.  Its
support is $\{(q,0),(1,1)\}^{m-1}$. 
For a prefix
\[
z_{<m}=((p_1,y_1),\ldots,(p_{m-1},y_{m-1}))
\] and $a \in \{q, 1\},$ define 
\[\Phi_a(z_{<m}) := \E_{Y_{m+1:T}\sim \Ber(q)^{\otimes (T-m)}}
\Big[
\Cal_T(p_1,\ldots,p_{m-1},a,q,\ldots,q;
y_1,\ldots,y_{m-1},1,Y_{m+1:T})
\Big].\]

\begin{equation}
\label{eq:hybrid-difference}
\E_{Z\sim\cH_m}[\Cal_T(Z)]
-
\E_{Z\sim\cH_{m-1}}[\Cal_T(Z)]
=
q\,
\E_{Z_{<m}}
\bigl[
\Phi_1(Z_{<m})-\Phi_q(Z_{<m})
\bigr].
\end{equation}
Here the $(q,0)$ contribution cancels, since it appears with the same
probability $1-q$ under both hybrids.

Because the left-hand side of \eqref{eq:hybrid-difference} is strictly
positive, there exists a prefix
$\hat z_{<m}$
in the support of $Z_{<m}$ such that
\begin{equation}
\label{eq:hybrid-local-preference}
    \Phi_q(\hat z_{<m})<\Phi_1(\hat z_{<m})
\end{equation}

Equivalently, after this prefix and conditioned on $Y_m=1,$ reporting $q$ at time $m$ gives strictly smaller expected calibration error than reporting $1.$

We now build a prediction problem with a product data-generating
distribution.  Let the truthful
probabilities be
\[
p_t^\ast=
\begin{cases}
\hat p_t, & t<m,\\
1, & t=m,\\
q, & t>m,
\end{cases}
\]
and draw the outcomes independently as $Y_t\sim\Ber(p_t^\ast)$.

Define a prediction strategy $S=(S_1,\ldots,S_T)$ by
\[
S_t(y_{<t})=
\begin{cases}
p_t^\ast, & t\neq m,\\
q, & t=m \text{ and } y_{<m}=(\hat y_1,\ldots,\hat y_{m-1}),\\
1, & t=m \text{ and } y_{<m}\neq(\hat y_1,\ldots,\hat y_{m-1}).
\end{cases}
\]
Equivalently, $S$ reports truthfully on all histories except the history
\[
E=\{Y_1=\hat y_1,\ldots,Y_{m-1}=\hat y_{m-1}\}.
\]
On $E$, at time $m$ it reports $q$ instead of the truthful value $1$.

The event $E$ has positive probability, because
\[
\Pr(E)=\prod_{t=1}^{m-1}\Pr(Y_t=\hat y_t)
=
\prod_{t:\,\hat p_t=q}(1-q)
>0;
\]
when $\hat p_t=1$, necessarily $\hat y_t=1$ and the corresponding factor
is $1$.  Conditioned on $E$, the outcome at time $m$ is equal to $1$
almost surely and the remaining outcomes are independent with distribution
$\Ber(q)$.  Therefore \eqref{eq:hybrid-local-preference} is exactly the
statement that
\[
\E[\Cal_T(S(Y),Y)\mid E]
<
\E[\Cal_T(p^\ast,Y)\mid E].
\]
On $E^c$, the two report sequences coincide, so
\[
\E[\Cal_T(S(Y),Y)\mid E^c]
=
\E[\Cal_T(p^\ast,Y)\mid E^c].
\]
Since $\Pr(E)>0$, taking total expectations gives
\[
\E[\Cal_T(S(Y),Y)]
<
\E[\Cal_T(p^\ast,Y)].
\]
Thus, for this product distribution, the strategic predictor achieves
strictly smaller expected calibration error than the truthful predictor.
\end{proof}

The asymptotic completeness and soundness definitions eventually imply
the finite-sample separation required by the hybrid theorem.  The
corollary records the resulting no-deviation impossibility.

\begin{corollary}[Completeness and soundness rule out truthfulness]
\label{cor:complete-sound-not-truthful}
No sequence of calibration measures
$\Cal=(\Cal_T)_{T\ge1}$ can be complete and sound while also satisfying
the following no-deviation condition at every horizon:
\[
\E[\Cal_T(p^\ast,Y)]
\le
\E[\Cal_T(S(Y),Y)]
\]
for every product data-generating distribution and every prediction
strategy \(S\), where \(p^\ast\) is the truthful report sequence.
\end{corollary}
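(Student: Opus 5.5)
The plan is to argue by contradiction and reduce everything to \Cref{thm:hybrid-impossibility}. Suppose $\Cal=(\Cal_T)_{T\ge1}$ is complete and sound and satisfies the no-deviation condition at every horizon. Fix any $q\in(0,1)$, say $q=1/2$, and consider the two one-sample distributions $\cD_q=(1-q)\delta_{(q,0)}+q\delta_{(q,1)}$ and $\cD'_q=(1-q)\delta_{(q,0)}+q\delta_{(1,1)}$ from the theorem.

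\emph{Step 1: classify the two distributions.} Under $\cD_q$ the report is always $q$ and $\E[Y\mid R=q]=q$, so $\cD_q$ is calibrated. Under $\cD'_q$, conditioning on $R=q$ forces $Y=0$, so $\E[Y\mid R=q]=0\neq q$ on an event of probability $1-q>0$. Hence $\cD'_q$ is miscalibrated. The atom $(1,1)$ is itself calibrated, but this does not matter, since calibration fails on a positive-probability set.

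\emph{Step 2: extract a finite-sample separation.} By \Cref{def:soundness}, $c:=\liminf_{T\to\infty}\E_{(\cD'_q)^T}[\Cal_T]>0$. Hence there is $T_1$ with $\E_{(\cD'_q)^T}[\Cal_T]>c/2$ for all $T\ge T_1$. By \Cref{def:completeness}, there is $T_2$ with $\E_{\cD_q^T}[\Cal_T]<c/2$ for all $T\ge T_2$. For any $T\ge\max\{T_1,T_2\}$, the strict inequality \eqref{eq:hybrid-distinguish} therefore holds. Only this single horizon is needed.

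\emph{Step 3: apply the theorem.} \Cref{thm:hybrid-impossibility} applied at this $T$ yields a product data-generating distribution and a strategy $S$ with $\E[\Cal_T(S(Y),Y)]<\E[\Cal_T(p^\ast,Y)]$. This contradicts the assumed no-deviation condition at horizon $T$.

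There is no real obstacle here. The only points needing care are, first, checking that $\cD'_q$ is genuinely miscalibrated in the sense of the almost-sure definition, and second, turning the liminf in soundness and the limit in completeness into a strict inequality at one common horizon. Both are handled above.
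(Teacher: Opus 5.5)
Your proof is correct and follows the paper's argument exactly. You classify $\cD_q$ as calibrated and $\cD'_q$ as miscalibrated, use completeness and soundness to get the strict separation \eqref{eq:hybrid-distinguish} at one sufficiently large horizon, and invoke the hybrid theorem to contradict the no-deviation condition there. (One cosmetic point: soundness permits the liminf to be $+\infty$, so take $c$ to be any finite positive number below it, as the paper does with its constant $\gamma$, rather than the liminf itself.)
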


\begin{proof}
Fix any $q\in(0,1)$ and let $\cD_q,\cD'_q$ be the pair of distributions
from \Cref{thm:hybrid-impossibility}.  The distribution $\cD_q$ is
calibrated and $\cD'_q$ is miscalibrated.  By completeness,
\[
\lim_{T\to\infty}
\E_{Z\sim\cD_q^T}[\Cal_T(Z)]
=0,
\]
while by soundness there is a constant $\gamma>0$ such that, for all
sufficiently large $T$,
\[
\E_{Z\sim(\cD'_q)^T}[\Cal_T(Z)]
\ge \gamma.
\]
For all sufficiently large $T$, the calibrated expectation is smaller
than $\gamma$, so
\[
\E_{Z\sim\cD_q^T}[\Cal_T(Z)]
<
\E_{Z\sim(\cD'_q)^T}[\Cal_T(Z)].
\]
\Cref{thm:hybrid-impossibility} then implies that $\Cal_T$ is not
minimized by truthful reporting on some product data-generating
distribution at that horizon.
\end{proof}

We note that the theorem is deliberately finite-sample. It does not
require any asymptotic notion of soundness. For any fixed time horizon, a
separation between the calibrated product distribution \(\cD_q^T\) and
the miscalibrated product distribution \({\cD'_q}^T\) already implies
that, at the same horizon, a strategic predictor can strictly improve
over truthful reporting.

\section{Quantitative Lower Bounds from Impossibility}
\label{sec:quantitative-bounds}

In this section, we quantify the reduction behind
\Cref{thm:hybrid-impossibility}. For each \(q\in(0,1)\), let
\(\Delta_T(q)\) denote the offline \emph{distinguishing gap} between the
calibrated product distribution \(\cD_q^T\) and the miscalibrated product
distribution \((\cD'_q)^T\). Let \(\Adv^{\mathrm{prod}}_T(\Cal)\) denote
the largest expected \emph{gain} available to a strategic predictor under
a product distribution. The hybrid argument converts any positive
offline gap into a positive incentive to deviate from truthful reporting.
Quantitatively, the reduction loses a factor from averaging over the
\(T\) hybrid steps and another from the probability of reaching the
history on which the strategic predictor deviates. This gives a lower
bound of order \(\frac{(1-q)^{T-1}}{qT}\Delta_T(q)\), which is
exponentially small in the horizon.

\subsection{Hybrid Lower Bound}

For a calibration measure $\Cal_T$, define the product-distribution
strategic advantage
\[
\Adv^{\mathrm{prod}}_T(\Cal)
:=
\sup_{p^\ast,S}
\left(
\E[\Cal_T(p^\ast,Y)]-\E[\Cal_T(S(Y),Y)]
\right)_{+},
\]
where the supremum is over all product distributions
$Y_t\sim\Ber(p_t^\ast)$ and all prediction strategies $S$. Thus
$\Adv^{\mathrm{prod}}_T(\Cal)=0$ exactly means that no strategy can
improve on truthful reporting under a product distribution, while
$\Adv^{\mathrm{prod}}_T(\Cal)\le\beta_T$ is additive
$\beta_T$-truthfulness.

For $q\in(0,1)$, define
\[
\cD_q=(1-q)\delta_{(q,0)}+q\delta_{(q,1)}
\qquad\text{and}\qquad
\cD'_q=(1-q)\delta_{(q,0)}+q\delta_{(1,1)}.
\]
The corresponding distinguishing gap is
\[
\Delta_T(q)
:=
\E_{Z\sim(\cD'_q)^T}[\Cal_T(Z)]
-
\E_{Z\sim\cD_q^T}[\Cal_T(Z)].
\]

\begin{theorem}[Quantitative hybrid lower bound]
\label{thm:quantitative-hybrid-impossibility}
For every $q\in(0,1)$,
\[
\Adv^{\mathrm{prod}}_T(\Cal)
\ge
\frac{(1-q)^{T-1}}{qT}\Delta_T(q).
\]
\end{theorem}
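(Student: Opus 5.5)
We quantify each step of the proof of \Cref{thm:hybrid-impossibility}. If $\Delta_T(q)\le 0$ the bound holds trivially, since $\Adv^{\mathrm{prod}}_T(\Cal)\ge 0$. So assume $\Delta_T(q)>0$.

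\emph{Telescoping.} Use the same hybrids $\cH_0=\cD_q^T,\ldots,\cH_T=(\cD'_q)^T$. The telescoping sum gives $\Delta_T(q)=\sum_{m=1}^T(\E_{\cH_m}[\Cal_T]-\E_{\cH_{m-1}}[\Cal_T])$. Hence some $m\in[T]$ has adjacent gap at least $\Delta_T(q)/T$. By \eqref{eq:hybrid-difference} this gap equals $q\,\E_{Z_{<m}}[\Phi_1(Z_{<m})-\Phi_q(Z_{<m})]$. Therefore
\[
\E_{Z_{<m}}\bigl[\Phi_1(Z_{<m})-\Phi_q(Z_{<m})\bigr]\ge \frac{\Delta_T(q)}{qT}.
\]

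\emph{Choosing the prefix.} An expectation is at most a maximum, so some prefix $\hat z_{<m}\in\{(q,0),(1,1)\}^{m-1}$ satisfies $\Phi_1(\hat z_{<m})-\Phi_q(\hat z_{<m})\ge \Delta_T(q)/(qT)$.

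\emph{The product instance.} Build the same product instance and strategy $S$ as in the proof of \Cref{thm:hybrid-impossibility}: $S$ deviates to report $q$ at time $m$ only on the event $E$. On $E^c$ the two report sequences coincide, so their contributions cancel exactly. On $E$, conditioning gives $Y_m=1$ and an i.i.d.\ $\Ber(q)$ tail. The conditional gain is therefore exactly $\Phi_1(\hat z_{<m})-\Phi_q(\hat z_{<m})$, and
\[
\E[\Cal_T(p^\ast,Y)]-\E[\Cal_T(S(Y),Y)]=\Pr(E)\bigl(\Phi_1(\hat z_{<m})-\Phi_q(\hat z_{<m})\bigr).
\]

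\emph{Bounding $\Pr(E)$.} We have $\Pr(E)=(1-q)^{k}$, where $k$ is the number of prefix coordinates with $\hat p_t=q$. Coordinates with $\hat p_t=1$ contribute factor $1$. Since $k\le m-1\le T-1$ and $1-q<1$, we get $\Pr(E)\ge(1-q)^{T-1}$.

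Combining these bounds, the advantage of $S$ on this instance is at least $\frac{(1-q)^{T-1}}{qT}\Delta_T(q)$, and $\Adv^{\mathrm{prod}}_T(\Cal)$ is a supremum over such instances, which proves the claim.

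The only real difficulty is losing nothing in the prefix-selection step. One might worry that averaging over the prefix distribution costs another exponentially small factor. This does not happen: we select the maximizing prefix, and the only price is $\Pr(E)$, which is bounded uniformly over all prefixes by $(1-q)^{T-1}$. This is also why we use the maximizing prefix rather than the one with the largest probability.
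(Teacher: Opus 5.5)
Your proposal is correct and follows the paper's proof essentially step for step. You telescope over the hybrids to find an adjacent gap of at least $\Delta_T(q)/T$, divide by $q$ via the hybrid-difference identity, select a prefix in $\{(q,0),(1,1)\}^{m-1}$ whose local gap is at least the average, and embed it in the same product instance, paying only $\Pr(E)=(1-q)^{k}\ge(1-q)^{T-1}$.
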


\begin{proof}
If $\Delta_T(q)\le0$ the claim is immediate, so assume
$\Delta_T(q)>0$. Let $\cH_m$ be the hybrid distributions from the proof
of \Cref{thm:hybrid-impossibility}. By telescoping,
\[
\Delta_T(q)
=
\sum_{m=1}^T
\left(
\E_{Z\sim\cH_m}[\Cal_T(Z)]
-
\E_{Z\sim\cH_{m-1}}[\Cal_T(Z)]
\right).
\]
Therefore for some $m\in[T]$,
\[
\E_{Z\sim\cH_m}[\Cal_T(Z)]
-
\E_{Z\sim\cH_{m-1}}[\Cal_T(Z)]
\ge
\frac{\Delta_T(q)}{T}.
\]

Let
\[
Z_{<m}=((P_1,Y_1),\ldots,(P_{m-1},Y_{m-1}))
\]
denote the common prefix random vector of $\cH_{m-1}$ and $\cH_m$. Its
support is contained in
\[
\{(q,0),(1,1)\}^{m-1}.
\]

For a prefix
$z_{<m}=((p_1,y_1),\ldots,(p_{m-1},y_{m-1}))$ and
$a\in\{q,1\}$, define
\[
\Phi_a(z_{<m})
:=
\E_{Y_{m+1:T}\sim\Ber(q)^{\otimes(T-m)}}
\Big[
\Cal_T(p_1,\ldots,p_{m-1},a,q,\ldots,q;
y_1,\ldots,y_{m-1},1,Y_{m+1:T})
\Big].
\]

Then, as in \Cref{sec:product-dist-impossibility},
\begin{equation}
\label{eq:quantitative-hybrid-difference}
\E_{Z\sim\cH_m}[\Cal_T(Z)]
-
\E_{Z\sim\cH_{m-1}}[\Cal_T(Z)]
=
q\,\E_{Z_{<m}}
\left[
\Phi_1(Z_{<m})-\Phi_q(Z_{<m})
\right].
\end{equation}

The expectation on the right-hand side of
\eqref{eq:quantitative-hybrid-difference} is therefore at least
$\Delta_T(q)/(qT)$. Hence there exists a prefix
\[
\hat z_{<m}
=
((\hat p_1,\hat y_1),\ldots,(\hat p_{m-1},\hat y_{m-1}))
\]
in the support of \(Z_{<m}\) such that
\begin{equation}
\label{eq:quantitative-local-gap}
\Phi_1(\hat z_{<m})-\Phi_q(\hat z_{<m})
\ge
\frac{\Delta_T(q)}{qT}.
\end{equation}

Embed this prefix into the same product distribution as in the
qualitative proof: $p_t^\ast=\hat p_t$ for $t<m$, $p_m^\ast=1$, and
$p_t^\ast=q$ for $t>m$. Let $S$ be the strategy that reports $q$ at
time $m$ on the prefix event
\[
E=\{Y_1=\hat y_1,\ldots,Y_{m-1}=\hat y_{m-1}\}
\]
and otherwise reports truthfully. Exactly as in the proof of
\Cref{thm:hybrid-impossibility},
\[
\Pr(E)
=
\prod_{t:\,\hat p_t=q}(1-q)
=(1-q)^{|\{t<m:\hat p_t=q\}|}
\ge
(1-q)^{T-1}.
\]
Conditioned on $E$, \eqref{eq:quantitative-local-gap} is exactly
\[
\E[\Cal_T(p^\ast,Y)-\Cal_T(S(Y),Y)\mid E]
\ge
\frac{\Delta_T(q)}{qT}.
\]
On $E^c$, the two strategies coincide. Therefore
\[
\E[\Cal_T(p^\ast,Y)]-\E[\Cal_T(S(Y),Y)]
\ge
\Pr(E)\cdot\frac{\Delta_T(q)}{qT}
\ge
\frac{(1-q)^{T-1}}{qT}\Delta_T(q).
\]
Taking the supremum over product distributions and strategies proves the
theorem.
\end{proof}

For complete and sound calibration measures, the distinguishing gap for
each fixed \(q\) is eventually bounded below by a positive constant.
Choosing \(q\) as a function of a target exponential rate gives the
following consequence.

\begin{corollary}[Exponential lower bound under completeness and soundness]
\label{cor:complete-sound-quantitative-advantage}
Let \(\Cal=(\Cal_T)_{T\ge1}\) be complete and sound. For every
\(\rho\in(0,1)\), there exists a constant \(c_\rho>0\) such that, for all
sufficiently large \(T\),
\[
\Adv_T^{\mathrm{prod}}(\Cal)
\ge
c_\rho\rho^T.
\]
\end{corollary}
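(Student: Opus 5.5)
Given \(\rho\in(0,1)\), I will choose \(q:=1-\rho\in(0,1)\), so that \((1-q)^{T-1}=\rho^{T-1}\), and apply \Cref{thm:quantitative-hybrid-impossibility} with this fixed \(q\). This gives
\[
\Adv^{\mathrm{prod}}_T(\Cal)\ge \frac{\rho^{T-1}}{qT}\Delta_T(q).
\]
Two things remain. First, \(\Delta_T(q)\) must be bounded below by a positive constant for large \(T\). Second, the polynomial factor \(1/T\) has to be absorbed into the exponential rate.

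\emph{Lower bound on \(\Delta_T(q)\).} The argument follows the proof of \Cref{cor:complete-sound-not-truthful}. The distribution \(\cD_q\) is calibrated, so completeness gives \(\E_{\cD_q^T}[\Cal_T]\to0\). The distribution \(\cD'_q\) is miscalibrated, since conditional on report \(q\) the outcome is \(0\ne q\). Soundness therefore gives \(\gamma>0\) with \(\E_{(\cD'_q)^T}[\Cal_T]\ge 2\gamma\) for all large \(T\); we may take \(2\gamma\) to be any positive number below the liminf. For large \(T\) the calibrated expectation is at most \(\gamma\), so \(\Delta_T(q)\ge\gamma\).

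\emph{Absorbing the \(1/T\) factor.} This is the only mild obstacle. A fixed \(q=1-\rho\) yields \(\gamma\rho^{T-1}/(qT)\), which is not of the form \(c\rho^T\) because of the \(1/T\). I will instead use a slightly smaller base. Pick \(\rho'\in(0,\rho)\), for instance \(\rho'=\rho/2\), and set \(q=1-\rho'\). The steps above then give constants \(\gamma'>0\) and \(T_0\) such that, for \(T\ge T_0\),
\[
\Adv^{\mathrm{prod}}_T(\Cal)\ge \frac{\gamma'}{q\rho'}\cdot\frac{(\rho')^{T}}{T}.
\]
Since \((\rho'/\rho)^T\cdot T\to0\), we have \((\rho')^T/T\ge(\rho')^T\ge(\rho'/\rho)^{T}\rho^T\). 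That is still decaying, so the choice must run the other way. We need \((\rho')^T/T\ge c\,\rho^T\), which requires \(\rho'\ge\rho\).

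So the base should be \(\rho'\in(\rho,1)\), for example \(\rho'=(1+\rho)/2\), with \(q=1-\rho'\in(0,1)\). Then \((\rho'/\rho)^T/T\to\infty\), so for all large \(T\) we have \((\rho')^T/T\ge\rho^T\). Combining with the displayed bound gives \(\Adv^{\mathrm{prod}}_T(\Cal)\ge c_\rho\rho^T\) with \(c_\rho=\gamma'/(q\rho')>0\) for all sufficiently large \(T\). Here \(\gamma'\) is the soundness/completeness gap for this fixed \(q\); it depends only on \(\rho\), which completes the argument.
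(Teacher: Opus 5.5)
Your final argument is correct and is essentially the paper's proof. You take $q=(1-\rho)/2$, so that $1-q=(1+\rho)/2>\rho$, and use completeness and soundness to get $\Delta_T(q)\ge\gamma'>0$ for large $T$; the growth of $\bigl((1+\rho)/(2\rho)\bigr)^T/T$ then absorbs the $1/T$ factor from \Cref{thm:quantitative-hybrid-impossibility}. Delete the abandoned detour with $\rho'=\rho/2$ from the write-up: the final argument does not use it, and it contains the reversed inequality $(\rho')^T/T\ge(\rho')^T$.
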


\begin{proof}
Fix \(\rho\in(0,1)\), and set
\[
q:=\frac{1-\rho}{2}.
\]
Then \(q\in(0,1)\) and
\[
1-q=\frac{1+\rho}{2}>\rho.
\]
The distribution \(\cD_q\) is calibrated, whereas \(\cD'_q\) is
miscalibrated. Completeness and soundness therefore imply that there are
constants \(\delta_\rho>0\) and \(T_0\) such that
\[
\Delta_T(q)\ge\delta_\rho
\]
for every \(T\ge T_0\). By
\Cref{thm:quantitative-hybrid-impossibility},
\[
\Adv_T^{\mathrm{prod}}(\Cal)
\ge
\frac{\delta_\rho(1-q)^{T-1}}{qT}.
\]
Since \((1-q)/\rho>1\), the ratio
\[
\frac{(1-q)^{T-1}}{T\rho^T}
\]
tends to infinity. In particular, this ratio is at least \(1\) for all
sufficiently large \(T\). The claim follows with
\(c_\rho=\delta_\rho/q>0\).
\end{proof}

This quantitative statement also constrains additive truthfulness. The
next corollary rewrites the lower bound as a necessary upper bound on the
distinguishing gap between the two product distributions.

\begin{corollary}[Necessary condition for additive truthfulness]
\label{cor:hybrid-necessary-condition}
If $\Cal_T$ is $\beta_T$-additively truthful for product distributions,
then for every $q\in(0,1)$,
\[
\Delta_T(q)
\le
\frac{qT}{(1-q)^{T-1}}\beta_T.
\]
\end{corollary}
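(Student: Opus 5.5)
This corollary is a direct rearrangement of \Cref{thm:quantitative-hybrid-impossibility}, so I plan to combine that lower bound with the definition of additive truthfulness. Fix $q\in(0,1)$.

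First, I will observe that $\beta_T$-additive truthfulness for product distributions says exactly that
\[
\E[\Cal_T(p^\ast,Y)]-\E[\Cal_T(S(Y),Y)]\le\beta_T
\]
for every product distribution and every strategy $S$. Since $\beta_T\ge0$ (take $S$ to be truthful), the positive part of this difference is also at most $\beta_T$. Taking the supremum then gives $\Adv^{\mathrm{prod}}_T(\Cal)\le\beta_T$, as already noted after the definition of $\Adv^{\mathrm{prod}}_T$.

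Second, I will apply \Cref{thm:quantitative-hybrid-impossibility}, which gives
\[
\frac{(1-q)^{T-1}}{qT}\Delta_T(q)\le\Adv^{\mathrm{prod}}_T(\Cal)\le\beta_T.
\]
The prefactor $(1-q)^{T-1}/(qT)$ is strictly positive because $q\in(0,1)$. Multiplying both sides by its reciprocal $qT/(1-q)^{T-1}$ yields the claimed inequality. If $\Delta_T(q)\le 0$, the bound holds trivially since the right-hand side is nonnegative, so no separate case is really needed.

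There is no substantive obstacle here. The only points to check are the sign of $\beta_T$ and the positivity of the prefactor, both of which are immediate.
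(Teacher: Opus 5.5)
Your proposal is correct and matches the paper's proof: the paper also notes that additive $\beta_T$-truthfulness gives $\Adv^{\mathrm{prod}}_T(\Cal)\le\beta_T$, then rearranges the bound from \Cref{thm:quantitative-hybrid-impossibility}. Your checks that $\beta_T\ge0$ (because the truthful strategy is admissible) and that the prefactor $(1-q)^{T-1}/(qT)$ is positive are exactly the details the paper leaves implicit.
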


\begin{proof}
Additive $\beta_T$-truthfulness implies
$\Adv^{\mathrm{prod}}_T(\Cal)\le\beta_T$. Apply
\Cref{thm:quantitative-hybrid-impossibility} and rearrange.
\end{proof}

\section{Approximately Truthful Reductions}
\label{sec:approximate-truthfulness}

\Cref{sec:product-dist-impossibility} shows that exact truthfulness is
incompatible with even weak soundness requirements.  This impossibility comes from exponentially rare prefix realizations, where a truthful predictor looks miscalibrated and is thus incentivized to correct historical realizations by strategizing in the future. 
Approximate truthfulness, however, relaxes this truthful requirement on every historical prefix realization. Approximate truthfulness only requires the truthful expected error to approximate the expected error of an adaptive strategy, where both expectations are calculated ex-ante. 

In this section, we reduce approximate truthfulness to a concentration
property of a sound and complete base calibration measure.  The reduction
has two stages.  First, we subtract a sampling-error threshold from the base
measure.  If the error remaining under truthful reporting is uniformly
small, this thresholded measure is additively approximately truthful.  We
then add a vanishing squared-loss term to the additive construction.  The
conditional squared-loss decomposition supplies a variance-scale lower
bound that turns the additive comparison into a multiplicative one.

The base calibration measure need not be truthful by itself.  It is
sufficient that, under truthful reporting, the part of the error that
remains after subtracting the sampling error is small compared to the
truthful squared loss
\[
\frac1T\sum_{t=1}^T\E[(p_t^\ast-Y_t)^2]
=
\frac1T\sum_{t=1}^T\E[p_t^\ast(1-p_t^\ast)].
\]

We show that both stages preserve the statistical guarantees of the base
measure, and then instantiate the multiplicative reduction with smooth
calibration error.


\subsection{The Reduction}

We first explain the requirements of the base measure for the reduction to
hold. Let $\Err=(\Err_T)_{T\ge1}$ be a sequence of base calibration
measures, where
$\Err_T:[0,1]^T\times\{0,1\}^T\to[0,1]$.
Write
\[
\overline V_T
:=
\frac1T\sum_{t=1}^T\E[p_t^\ast(1-p_t^\ast)]
=
\E\left[\frac1T\sum_{t=1}^T(p_t^\ast-Y_t)^2\right].
\]
This $\overline V_T$ is the average intrinsic variance of the
outcome-generating process.

We require uniform completeness of the error. When the predictor is truthful, uniform completeness gives a quantification of the sampling error in the calibration measure. Precisely, there exists $\tau_T$ accounting for the sampling error, and $\eta_T$ accounting for the tail, such that, after subtracting the sampling error $\tau_T$, the tail can be controlled by multiples of the variance $\eta_T\overline V_T$.

\begin{definition}[Uniform completeness]
\label{def:uniform-completeness}
The error $\Err$ is \emph{uniformly complete} if
there exist sequences $(\tau_T)_{T\ge1}$ and $(\eta_T)_{T\ge1}$ with
\[
\tau_T>0,
\qquad
\eta_T\ge0,
\qquad
\tau_T\to 0,
\qquad
\eta_T\to 0,
\]
such that the following holds for every $T$ and every joint distribution
of $Y_1,\ldots,Y_T$. If
$p_t^\ast=\E[Y_t\mid Y_{<t}]$ denotes the truthful report, then
\begin{equation}
\label{eq:uniform-completeness}
\E\left[
\bigl(\Err_T(p^\ast,Y)-\tau_T\bigr)_+
\right]
\le
\eta_T\overline V_T.
\end{equation}
The sequences $\tau_T$ and $\eta_T$ are independent of the joint
distribution of $Y$.
\end{definition}

We note that uniform completeness is not directly implied by the completeness of a calibration measure. 

Known calibration measures are uniformly complete. The following lemma establishes uniform completeness of the normalized smooth
calibration error, which we will use later.

\begin{lemma}[Uniform completeness of smooth calibration error]
\label{lem:smCE-uniform-completeness}
Fix $0<\alpha<1/2$.  There exists $T_0=T_0(\alpha)$ such that
\[
\tau_T=T^{-\alpha}
\qquad\text{and}\qquad
\eta_T=
\begin{cases}
2, & T<T_0,\\
\exp\!\left(-T^{(1-2\alpha)/2}\right), & T\ge T_0.
\end{cases}
\]
The normalized smooth
calibration error $\smCE=(\smCE_T)_{T\ge1}$ is uniformly complete with witness sequences defined above. 
\end{lemma}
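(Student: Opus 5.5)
The plan is to handle the two regimes of $\eta_T$ separately, then prove a tail bound for the supremum over the Lipschitz class whose size is proportional to the expected conditional variance. Write $D_t=Y_t-p_t^\ast$, $W=\sum_{t}p_t^\ast(1-p_t^\ast)$ (so $\E W=T\overline V_T$), and $X=\smCE_T(p^\ast,Y)$.

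\emph{Small $T$ (the constant $\eta_T=2$).} This case holds for every $T$ with no concentration at all. Since $|f|\le1$,
\[
(X-\tau_T)_+\le X\le \frac1T\sum_{t=1}^T|D_t|.
\]
Conditionally on $Y_{<t}$, we have $\E[|Y_t-p_t^\ast|\mid Y_{<t}]=2p_t^\ast(1-p_t^\ast)$. Taking expectations gives $\E[(X-\tau_T)_+]\le 2\overline V_T$, which is \eqref{eq:uniform-completeness} with $\eta_T=2$. This also shows why the witness must scale with $\overline V_T$: the trivial pathwise bound $A:=\frac1T\sum_t|D_t|$ is linear in the variance in expectation.

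\emph{Large $T$.} Reduce the supremum to finitely many martingales. Take an $\varepsilon$-net $\cF_\varepsilon$ of $\cF$ in sup norm, with $|\cF_\varepsilon|\le \exp(C/\varepsilon)$. Since $\frac1T\sum_t|D_t|\le1$, we get $X\le \max_{f\in\cF_\varepsilon}|M_f|/T+\varepsilon$, where $M_f=\sum_t f(p_t^\ast)D_t$. Choose $\varepsilon=\tau_T/2=T^{-\alpha}/2$. Each $M_f$ is a martingale with increments bounded by $1$ and predictable quadratic variation at most $W$.

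For each $f$, use the exponential supermartingale $\exp(\lambda M_f-\psi(\lambda)W)$ with $\psi(\lambda)=e^\lambda-1-\lambda$ (Bennett/Freedman form). Then
\[
\Pr(M_f\ge sT)\le e^{-\lambda sT}\,\E\bigl[e^{\psi(\lambda)W}\bigr].
\]
The key point for obtaining a factor $\overline V_T$ is convexity: $W\in[0,T/4]$, so
\[
\E\bigl[e^{cW}\bigr]\le 1+4\overline V_T\bigl(e^{cT/4}-1\bigr).
\]
This splits the tail into a variance-proportional piece and a piece $e^{-\lambda sT}$. To make the second piece also proportional to $\overline V_T$, intersect with the event $\{A>\tau_T/2\}$, which is necessary for $X>\tau_T$. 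By Markov, $\Pr(A>\tau_T/2)\le 4\overline V_T/\tau_T$. I would combine this with the martingale bound by running the argument conditionally on the first time the cumulative $|D_t|$ exceeds $\tau_T T/2$. Before that time the outcomes have not yet moved, and the remaining deviation must still be produced by fresh variance.

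Finally, integrate the tail via $\E[(X-\tau_T)_+]\le\int_{\tau_T/2}^{1}\Pr(\max_f|M_f|\ge sT)\,ds$. Take a union bound over $|\cF_\varepsilon|\le e^{2CT^{\alpha}}$ and choose $\lambda$ of order $\tau_T$, or $\lambda=\log(\tau_T T/W)$ in the low-variance regime. This should yield a bound $\overline V_T\exp(-c\,T^{1-2\alpha})$. Because $T^{1-2\alpha}\gg T^{(1-2\alpha)/2}$, this is below $\overline V_T\exp(-T^{(1-2\alpha)/2})$ for $T\ge T_0(\alpha)$, which absorbs the net size and constants. I expect to use the concentration bound of \Cref{sec:smCE-conc} for the union-bounded martingale tail.

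\emph{Main obstacle.} The hardest step is making the tail genuinely proportional to $\overline V_T$ rather than merely exponentially small. When $\overline V_T$ is itself exponentially small, a plain Freedman bound leaves an additive $e^{-\Omega(\tau_T^2T)}$ term that is not dominated by $\eta_T\overline V_T$. The device above is the convexity bound on $\E e^{cW}$ combined with the Markov bound on $A$. If it fails to remove the variance-free term, my fallback is a Bennett-style estimate of the form $(eW/(sT))^{sT}$, which is linear in $W$ once $sT\ge1$, followed by Jensen in $W$.
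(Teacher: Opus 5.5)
Your small-$T$ case matches the paper's. The large-$T$ argument, however, has two genuine gaps.

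\textbf{The union bound over the net.} The union bound over the sup-norm net cannot cover the full range $0<\alpha<1/2$. Your approximation step $X\le\max_{f\in\cF_\varepsilon}|M_f|/T+\varepsilon$ forces $\varepsilon<\tau_T=T^{-\alpha}$, so $\log|\cF_\varepsilon|\asymp T^{\alpha}$. Meanwhile the per-function tail at deviation of order $T^{1-\alpha}$ is no better than $\exp(-cT^{1-2\alpha})$: for $f\equiv1$ and $p_t^\ast\equiv1/2$ it is a simple random walk. The product is small only when $T^{\alpha}\ll T^{1-2\alpha}$, i.e.\ $\alpha<1/3$. The comparison $T^{1-2\alpha}\gg T^{(1-2\alpha)/2}$ that you invoke is a different one, and it does not absorb the net size. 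For $\alpha\in(1/3,1/2)$ the union bound exceeds $1$. The paper pays no entropy cost because it uses the one-dimensional structure of $\cF$. Writing $f(p)=f(0)+\int_0^1 f'(u)\ind[u\le p]\,du$ gives, simultaneously for all $f\in\cF$, $\bigl|\sum_t f(p_t^\ast)D_t\bigr|\le|R|+\int_0^1|B(u)|\,du$, with $R=\sum_tD_t$ and $B(u)=\sum_t\ind[u\le p_t^\ast]D_t$. Each $B(u)$ is a martingale transform with predictable weights, and the integral is handled by Jensen for the convex function $\cosh$. Chaining would be the other standard repair.

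\textbf{The factor $\overline V_T$.} You rightly identify this as the main obstacle, but the proposal never actually obtains it. The step $\Pr(M_f\ge sT)\le e^{-\lambda sT}\E[e^{\psi(\lambda)W}]$ does not follow from $\E[e^{\lambda M_f-\psi(\lambda)W}]\le1$, since future variance may correlate with past upward moves. In any case, any bound routed through $\E[e^{\lambda M_f}]\ge1$ keeps the variance-free term $e^{-\lambda sT}$. Intersecting with $\{A>\tau_T/2\}$ yields only the \emph{minimum} of $4\overline V_T/\tau_T$ and the martingale bound, not their product. The Bennett fallback controls $\Pr(M_f\ge x,\,W\le w)$ only for deterministic $w$, so it cannot be averaged over $W$ by Jensen, and peeling over $w$ again gives a minimum.

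Your stopping-time idea can be made rigorous, but not with matching thresholds. Stop at the first $\rho$ with $\sum_{t\le\rho}|D_t|>\tau_TT/4$. Then $|\sum_{t\le\rho}f(p_t^\ast)D_t|\le\tau_TT/4+1$, so a fresh deviation of order $\tau_TT$ is still forced after $\rho$. Now $\Pr(\rho\le T)\le8\overline V_T/\tau_T$ multiplies a conditional Azuma bound given the history up to $\rho$. With threshold $\tau_TT/2$ on both sides, nothing fresh is forced. Also, it is not true that outcomes ``have not yet moved'' before $\rho$. With this fix and the integral representation in place of the net, your route becomes a valid alternative proof; with the net it covers only $\alpha<1/3$.

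The paper instead uses the nonnegative potential $G_j=2(\cosh(\lambda N_j)-1)$, where $N_j=\sum_{t\le j}h_tD_t$ with $h_t$ predictable and $|h_t|\le1$. This potential vanishes at $0$ and satisfies $\E[G_t\mid\mathcal F_{t-1}]\le(1+a_\lambda)G_{t-1}+2a_\lambda h_t^2p_t^\ast(1-p_t^\ast)$ with $a_\lambda=\lambda^2e^\lambda/2$. Hence $\E[\cosh(\lambda N_T)-1]$ is linear in the expected variance, and Markov's inequality finishes.

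Finally, you cannot cite \Cref{sec:smCE-conc} as an input. Its tail bound $\Pr[\smCE_T(p^\ast,Y)\ge T^{-\alpha}]\le\exp(-T^{(1-2\alpha)/2})\overline V_T$ \emph{is} the large-$T$ content of this lemma. Once you have it, the conclusion is one line via $(X-\tau_T)_+\le\ind\{X\ge\tau_T\}$.
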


The proof is deferred to \Cref{sec:smCE-conc}.

\subsubsection{Additively Approximate Truthfulness}

To achieve additively approximate truthfulness, the reduction subtracts the sampling-error threshold and
truncates at zero.  Define
\begin{equation}
    \label{eq: additive reduction}
\Cal_T^{\mathrm{add}}(r,y)
:=
\bigl(\Err_T(r,y)-\tau_T\bigr)_+.
\end{equation}

Uniform completeness controls the expected value of this measure
under truthful reporting. Since the measure is non-negative, additive approximate truthfulness follows immediately from the tail bound.  

\begin{theorem}[Additive-truthfulness reduction]
\label{thm:generic-additive-truthfulness}
Let $\Err=(\Err_T)_{T\ge1}$ be uniformly complete with witness sequences
$\tau_T$ and $\eta_T$ as in \Cref{def:uniform-completeness}.  Then for every
joint distribution of $Y$ and every prediction strategy $S$,
\[
\E\left[\Cal_T^{\mathrm{add}}(p^\ast,Y)\right]
\le
\eta_T\overline V_T
\le
\E\left[\Cal_T^{\mathrm{add}}(S(Y),Y)\right]
+\eta_T\overline V_T.
\]
In particular, $\Cal_T^{\mathrm{add}}$ is
$\eta_T/4$-additively truthful for arbitrary joint distributions.
\end{theorem}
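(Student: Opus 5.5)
The argument is direct and needs nothing beyond \Cref{def:uniform-completeness} and the non-negativity of the positive part. I will establish the two inequalities of the displayed chain separately, then combine them with a uniform bound on $\overline V_T$ to get the additive constant.

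\textbf{First inequality.} By definition \eqref{eq: additive reduction}, $\Cal_T^{\mathrm{add}}(p^\ast,Y)=(\Err_T(p^\ast,Y)-\tau_T)_+$. Here $p^\ast_t=\E[Y_t\mid Y_{<t}]$ is exactly the truthful report appearing in \eqref{eq:uniform-completeness}, and the witness sequences $\tau_T,\eta_T$ do not depend on the joint law of $Y$. So the bound $\E[\Cal_T^{\mathrm{add}}(p^\ast,Y)]\le \eta_T\overline V_T$ is literally \eqref{eq:uniform-completeness} applied to the given joint distribution.

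\textbf{Second inequality.} For any prediction strategy $S$, the quantity $\Cal_T^{\mathrm{add}}(S(Y),Y)$ is a positive part and hence pointwise non-negative. Therefore its expectation is $\ge0$, and adding $\eta_T\overline V_T$ to both sides of $0\le \E[\Cal_T^{\mathrm{add}}(S(Y),Y)]$ gives the second inequality. Note that no property of $S$ is used, not even non-anticipation. The strategy only matters in that it ranges over everything that truthfulness has to be compared against.

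\textbf{Additive constant.} To conclude $\eta_T/4$-additive truthfulness in the sense of \Cref{def:additive-truthfulness} for arbitrary joint distributions, I will bound $\overline V_T$ uniformly. Since each $p_t^\ast\in[0,1]$, we have $p_t^\ast(1-p_t^\ast)\le 1/4$ pointwise, so $\overline V_T=\frac1T\sum_t\E[p_t^\ast(1-p_t^\ast)]\le 1/4$. Since $\eta_T\ge0$, this gives $\eta_T\overline V_T\le\eta_T/4$. Substituting into the chain yields
\[
\E[\Cal_T^{\mathrm{add}}(p^\ast,Y)]\le \E[\Cal_T^{\mathrm{add}}(S(Y),Y)]+\eta_T/4
\]
for every joint distribution and every strategy.

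There is no real obstacle. The only point needing care is that uniform completeness is stated for every joint distribution with witnesses independent of that distribution. This independence is what makes the resulting truthfulness constant $\eta_T/4$ uniform over all data-generating distributions, rather than only over product distributions.
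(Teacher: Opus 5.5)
Your proposal is correct and follows the paper's proof essentially step for step. Like the paper, you read the first inequality directly off \eqref{eq:uniform-completeness}, get the second from pointwise non-negativity of the positive part, and obtain the $\eta_T/4$ slack from $p_t^\ast(1-p_t^\ast)\le 1/4$, which gives $\overline V_T\le 1/4$.
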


\begin{proof}
By \eqref{eq:uniform-completeness},
\[
\E\left[\Cal_T^{\mathrm{add}}(p^\ast,Y)\right]
\le \eta_T\overline V_T.
\]
On the other hand,
$\E[\Cal_T^{\mathrm{add}}(S(Y),Y)]\ge0$ for every prediction strategy
$S$.  Adding this non-negative quantity gives the first claim.  Finally,
$p_t^\ast(1-p_t^\ast)\le1/4$ almost surely for every $t$, so
$\overline V_T\le1/4$.  The second claim follows from
\Cref{def:additive-truthfulness}.
\end{proof}

\begin{corollary}[Additive truthfulness for smooth calibration error]
\label{cor:smCE-additive-truthfulness}
Fix $0<\alpha<1/2$, and define the thresholded
smooth calibration error
\[
\Cal_T^{\mathrm{sm,add}}(r,y)
:=
\bigl(\smCE_T(r,y)-T^{-\alpha}\bigr)_+.
\]
For all sufficiently large $T$, $\Cal_T^{\mathrm{sm,add}}$ is
$\frac14\exp\!\left(-T^{(1-2\alpha)/2}\right)$-additively truthful for arbitrary joint
distributions.  

\end{corollary}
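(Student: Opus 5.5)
This corollary should follow by combining \Cref{lem:smCE-uniform-completeness} with \Cref{thm:generic-additive-truthfulness}. The first step is to fix $0<\alpha<1/2$ and take $\Err=\smCE$, checking that it fits the hypotheses of the reduction. The reduction assumes a base measure with values in $[0,1]$. Since every $f\in\cF$ satisfies $|f|\le1$ and $|y_t-r_t|\le1$, we get $0\le\smCE_T(r,y)\le1$. With the threshold $\tau_T=T^{-\alpha}$, the thresholded measure $\Cal_T^{\mathrm{sm,add}}$ defined in the corollary is exactly $\Cal_T^{\mathrm{add}}$ from \eqref{eq: additive reduction} built from this base measure.

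The second step is to invoke \Cref{lem:smCE-uniform-completeness}. It gives $T_0=T_0(\alpha)$ such that $\smCE$ is uniformly complete with witnesses $\tau_T=T^{-\alpha}$ and $\eta_T$. Here $\eta_T=\exp(-T^{(1-2\alpha)/2})$ for $T\ge T_0$, and $\eta_T=2$ for $T<T_0$; both sequences are independent of the joint distribution of $Y$. Applying \Cref{thm:generic-additive-truthfulness}, for every joint distribution of $Y$ and every prediction strategy $S$ we obtain
\[
\E\bigl[\Cal_T^{\mathrm{sm,add}}(p^\ast,Y)\bigr]\le \E\bigl[\Cal_T^{\mathrm{sm,add}}(S(Y),Y)\bigr]+\eta_T\overline V_T .
\]

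The final step bounds the slack. Since $p_t^\ast(1-p_t^\ast)\le1/4$ pointwise, $\overline V_T\le 1/4$. Hence for every $T\ge T_0$ the additive slack is at most $\tfrac14\exp(-T^{(1-2\alpha)/2})$, which is precisely $\beta_T$-additive truthfulness for arbitrary joint distributions in the sense of \Cref{def:additive-truthfulness}. "Sufficiently large $T$" in the statement means $T\ge T_0(\alpha)$. No step is a genuine obstacle: the only substantive input is the concentration bound behind \Cref{lem:smCE-uniform-completeness}, which is deferred to \Cref{sec:smCE-conc} and may be assumed here.
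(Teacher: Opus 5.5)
Your proposal is correct and follows the paper's proof. You take the witnesses $\tau_T=T^{-\alpha}$ and $\eta_T=\exp(-T^{(1-2\alpha)/2})$ for $T\ge T_0(\alpha)$ from \Cref{lem:smCE-uniform-completeness}, plug them into \Cref{thm:generic-additive-truthfulness} with $\Err=\smCE$, and use $\overline V_T\le 1/4$ to get slack $\eta_T/4$; your extra checks (that $\smCE_T\in[0,1]$ and that $\Cal_T^{\mathrm{sm,add}}$ coincides with $\Cal_T^{\mathrm{add}}$) are harmless additions to the same argument.
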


\begin{proof}
By \Cref{lem:smCE-uniform-completeness}, for all sufficiently large $T$ the
uniform-completeness witnesses for $\smCE_T$ are
\[
\tau_T=T^{-\alpha}
\qquad\text{and}\qquad
\eta_T=\exp\!\left(-T^{(1-2\alpha)/2}\right).
\]
The claim follows from \Cref{thm:generic-additive-truthfulness}, which gives
additive slack $\eta_T/4$.
\end{proof}

\subsubsection{Multiplicatively Approximate Truthfulness}


The multiplicative truthfulness reduction adds a normalized squared-loss term to
the additive construction.  For $\lambda_T>0$, define
\begin{equation}
\label{eq: mult reduction}
    \Cal_T^{\mathrm{mult}}(r,y)
:=
\Cal_T^{\mathrm{add}}(r,y)
+\lambda_T\cdot\frac1T\sum_{t=1}^T(r_t-y_t)^2.
\end{equation}

The additive construction in \Cref{eq: additive reduction} controls the absolute gain from a deviation, but
does not give a relative comparison when the error is close to
zero. This squared loss provides the error needed for a relative comparison. 
To see this, every prediction strategy pays at least
$\lambda_T\overline V_T$ in expected squared loss: 
for every predictable report $r_t$,
\[
\E[(r_t-Y_t)^2\mid Y_{<t}]
=
(r_t-p_t^\ast)^2+p_t^\ast(1-p_t^\ast)
\ge p_t^\ast(1-p_t^\ast),
\]
which is of the same order as the squared loss term. 

The following theorem formalizes this second stage of the reduction.

\begin{theorem}[Multiplicative-truthfulness reduction]
\label{thm:generic-multiplicative-truthfulness}
Let $\Err=(\Err_T)_{T\ge1}$ be uniformly complete with witness sequences
$\tau_T$ and $\eta_T$ as in \Cref{def:uniform-completeness}. Then for
every $\lambda_T>0$, the measure
\[
\Cal_T^{\mathrm{mult}}(r,y)
:=
\bigl(\Err_T(r,y)-\tau_T\bigr)_+
+
\lambda_T\cdot\frac1T\sum_{t=1}^T(r_t-y_t)^2
\]
is $\left(1+\eta_T/\lambda_T\right)$-multiplicatively truthful for
arbitrary joint distributions.
\end{theorem}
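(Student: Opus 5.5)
The plan is to bound the truthful expected error from above by $(\lambda_T+\eta_T)\overline V_T$ and every strategy's expected error from below by $\lambda_T\overline V_T$. The ratio of these two bounds is exactly $1+\eta_T/\lambda_T$.

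\emph{Upper bound for the truthful strategy.} Fix an arbitrary joint distribution of $Y$ and write $p^\ast$ for the truthful reports. The first term of $\Cal_T^{\mathrm{mult}}(p^\ast,Y)$ is $\Cal_T^{\mathrm{add}}(p^\ast,Y)$. By \eqref{eq:uniform-completeness}, equivalently the first inequality of \Cref{thm:generic-additive-truthfulness}, its expectation is at most $\eta_T\overline V_T$. For the squared-loss term, $p_t^\ast$ is $Y_{<t}$-measurable and $\E[Y_t\mid Y_{<t}]=p_t^\ast$, so $\E[(p_t^\ast-Y_t)^2\mid Y_{<t}]=p_t^\ast(1-p_t^\ast)$. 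Applying the tower property and averaging over $t$ shows the squared-loss term has expectation exactly $\lambda_T\overline V_T$. Together these give
\[
\E[\Cal_T^{\mathrm{mult}}(p^\ast,Y)]\le(\lambda_T+\eta_T)\overline V_T .
\]

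\emph{Lower bound for any strategy.} Let $S$ be any prediction strategy. Its report $r_t=S_t(Y_{<t})$ is predictable, so the conditional bias--variance identity displayed before the theorem applies:
\[
\E[(r_t-Y_t)^2\mid Y_{<t}]=(r_t-p_t^\ast)^2+p_t^\ast(1-p_t^\ast)\ge p_t^\ast(1-p_t^\ast).
\]
The truncated term $\Cal_T^{\mathrm{add}}$ is non-negative, so we may drop it. Taking expectations and averaging over $t$ then yields
\[
\E[\Cal_T^{\mathrm{mult}}(S(Y),Y)]\ge\lambda_T\overline V_T .
\]

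\emph{Combining.} Multiply the lower bound by $1+\eta_T/\lambda_T$ and compare with the upper bound:
\[
\E[\Cal_T^{\mathrm{mult}}(p^\ast,Y)]\le(1+\eta_T/\lambda_T)\,\lambda_T\overline V_T\le(1+\eta_T/\lambda_T)\,\E[\Cal_T^{\mathrm{mult}}(S(Y),Y)] .
\]
This is \Cref{def:multiplicative-truthfulness} for arbitrary joint distributions.

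\emph{Degenerate case and main difficulty.} If $\overline V_T=0$, the outcomes are almost surely determined along every positive-probability history. Then the upper bound gives truthful expected error $0$, and the inequality holds trivially, so no division is needed. The only step needing care is the lower bound. There we must check that the conditional identity is valid for arbitrary predictable reports. This holds because $r_t$ and $p_t^\ast$ are both functions of $Y_{<t}$ and $Y_t$ is binary, so the cross term vanishes under conditioning on $Y_{<t}$. The rest is bookkeeping.
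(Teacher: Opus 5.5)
Your proposal is correct and follows the paper's proof essentially line for line: you bound the truthful expected error above by $(\eta_T+\lambda_T)\overline V_T$ using uniform completeness and $\E[(p_t^\ast-Y_t)^2]=\E[p_t^\ast(1-p_t^\ast)]$, bound every strategy's expected error below by $\lambda_T\overline V_T$ using the conditional bias--variance decomposition, and combine the two. As you observe, the combining chain involves no division, so the separate degenerate case $\overline V_T=0$ (which the paper also treats separately) is not actually needed.
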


\begin{proof}
Fix a joint distribution of $Y$ and write
\[
\mu=\sum_{t=1}^T\E[p_t^\ast(1-p_t^\ast)].
\]
If $\mu=0$, then $Y_t=p_t^\ast$ almost surely for every $t$, and the
assumed bound forces the truthful thresholded error and truthful squared
loss both to be zero, so the conclusion is immediate. Assume now that $\mu>0.$
By \eqref{eq:uniform-completeness} and
$\E[(p_t^\ast-Y_t)^2]=\E[p_t^\ast(1-p_t^\ast)]$,
\[
\E[\Cal_T^{\mathrm{mult}}(p^\ast,Y)]
\le
(\eta_T+\lambda_T)\frac{\mu}{T}.
\]

Now let $S$ be any prediction strategy and set $r_t=S_t(Y_{<t})$.  Since
$r_t$ is measurable with respect to the past,
\[
\E[(r_t-Y_t)^2\mid Y_{<t}]
=
(r_t-p_t^\ast)^2+p_t^\ast(1-p_t^\ast)
\ge
p_t^\ast(1-p_t^\ast).
\]
Therefore
\[
\E[\Cal_T^{\mathrm{mult}}(S(Y),Y)]
\ge
\lambda_T\cdot\frac1T\sum_{t=1}^T\E[(r_t-Y_t)^2]
\ge
\lambda_T\frac{\mu}{T}.
\]
Combining the two inequalities gives
\[
\E[\Cal_T^{\mathrm{mult}}(p^\ast,Y)]
\le
\left(1+\frac{\eta_T}{\lambda_T}\right)
\E[\Cal_T^{\mathrm{mult}}(S(Y),Y)].
\]
\end{proof}

\begin{corollary}[Multiplicative truthfulness for smooth calibration error]
\label{cor:smCE-multiplicative-truthfulness}
Fix $0<\alpha<1/2$, let
\[
\lambda_T
:=
\exp\!\left(-\frac12T^{(1-2\alpha)/2}\right),
\]
and define
\[
\Cal_T^{\mathrm{sm,mult}}(r,y)
:=
\bigl(\smCE_T(r,y)-T^{-\alpha}\bigr)_+
+
\lambda_T\cdot\frac1T\sum_{t=1}^T(r_t-y_t)^2.
\]
For all sufficiently large $T$, $\Cal_T^{\mathrm{sm,mult}}$ is
\[
\left(1+\exp\!\left(-\frac12T^{(1-2\alpha)/2}\right)\right)
\text{-multiplicatively truthful}
\]
for arbitrary joint distributions.  
\end{corollary}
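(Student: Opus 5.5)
The plan is to apply \Cref{thm:generic-multiplicative-truthfulness} with base measure $\Err=\smCE$, using the witness sequences supplied by \Cref{lem:smCE-uniform-completeness}, and then substitute the chosen $\lambda_T$. Two preliminary checks are needed. First, the reduction assumes $\Err_T$ takes values in $[0,1]$. For \smCE this holds because every $f\in\cF$ satisfies $|f|\le1$ and $|y_t-r_t|\le1$. Hence each term of the normalized sum in \eqref{eq: smooth def} has absolute value at most $1$, and the supremum lies in $[0,1]$. Second, with $\tau_T=T^{-\alpha}$ the measure $\Cal_T^{\mathrm{sm,mult}}$ coincides exactly with $\Cal_T^{\mathrm{mult}}$ from \eqref{eq: mult reduction} for this base measure and this $\lambda_T>0$.

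Next I invoke \Cref{lem:smCE-uniform-completeness} with the fixed $\alpha\in(0,1/2)$. It gives $T_0(\alpha)$ such that, for $T\ge T_0$, $\smCE$ is uniformly complete with $\tau_T=T^{-\alpha}$ and $\eta_T=\exp(-T^{(1-2\alpha)/2})$. One subtlety is that \Cref{thm:generic-multiplicative-truthfulness} is stated for a uniformly complete sequence, while the lemma provides witnesses for all $T$, with $\eta_T=2$ below $T_0$. This is harmless: the proof of the reduction theorem uses \eqref{eq:uniform-completeness} only at the single horizon $T$ under consideration. So for each fixed $T\ge T_0$ the inequality holds with the small $\eta_T$, and the theorem yields $(1+\eta_T/\lambda_T)$-multiplicative truthfulness at that horizon for arbitrary joint distributions.

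Finally I compute the ratio:
\[
\frac{\eta_T}{\lambda_T}=\frac{\exp\!\left(-T^{(1-2\alpha)/2}\right)}{\exp\!\left(-\tfrac12T^{(1-2\alpha)/2}\right)}=\exp\!\left(-\tfrac12T^{(1-2\alpha)/2}\right).
\]
This gives exactly the stated factor for all $T\ge T_0(\alpha)$. The choice of $\lambda_T$ as the square root of $\eta_T$ is what balances the ratio; I see no real obstacle beyond confirming that the per-horizon use of uniform completeness is legitimate. The substantive work, the concentration bound behind \Cref{lem:smCE-uniform-completeness}, is deferred to \Cref{sec:smCE-conc} and is assumed here.
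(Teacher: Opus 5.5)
Your proposal is correct and matches the paper's proof: you instantiate \Cref{thm:generic-multiplicative-truthfulness} with $\Err=\smCE$, take the large-$T$ witnesses $\tau_T=T^{-\alpha}$ and $\eta_T=\exp(-T^{(1-2\alpha)/2})$ from \Cref{lem:smCE-uniform-completeness}, and compute $\eta_T/\lambda_T=\exp(-\tfrac12T^{(1-2\alpha)/2})$. Your two extra checks are sound and make explicit what the paper leaves implicit: that $\smCE_T$ takes values in $[0,1]$, and that the reduction uses \eqref{eq:uniform-completeness} only at the single horizon $T$.
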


\begin{proof}
By \Cref{lem:smCE-uniform-completeness}, for all sufficiently large $T$ the
uniform-completeness witnesses for $\smCE_T$ are
\[
\tau_T=T^{-\alpha}
\qquad\text{and}\qquad
\eta_T=\exp\!\left(-T^{(1-2\alpha)/2}\right).
\]
Applying \Cref{thm:generic-multiplicative-truthfulness} gives the factor
\[
1+\frac{\eta_T}{\lambda_T}
=
1+\exp\!\left(-\frac12T^{(1-2\alpha)/2}\right).
\]
\end{proof}

\subsection{Preserving Soundness and Completeness}

The next result covers both stages of the reduction: setting
$\lambda_T=0$ gives the additive construction, while $\lambda_T>0$ gives
the multiplicative construction.

\begin{theorem}[Preservation of soundness and completeness]
\label{thm:preserve-completeness-soundness}
Let
\[
\Err_T:[0,1]^T\times\{0,1\}^T\to\R_{\ge0}
\]
be a sequence of base calibration measures.  Let
\(\tau_T,\lambda_T\ge0\) satisfy \(\tau_T\to0\) and
\(\lambda_T\to0\), and define
\[
\Cal_T(r,y)
:=
\bigl(\Err_T(r,y)-\tau_T\bigr)_+
+
\lambda_T\cdot\frac1T\sum_{t=1}^T(r_t-y_t)^2.
\]
If \(\Err=(\Err_T)_{T\ge1}\) is complete, then
\(\Cal=(\Cal_T)_{T\ge1}\) is complete.  If \(\Err\) is sound, then
\(\Cal\) is sound.
\end{theorem}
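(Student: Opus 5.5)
Both directions will follow from two-sided pointwise comparisons between $\Cal_T$ and $\Err_T$, with errors vanishing uniformly in the transcript. Since each squared term satisfies $(r_t-y_t)^2\le 1$, the normalized squared loss lies in $[0,1]$. For every transcript $(r,y)$ the two facts I will use are
\[
\Cal_T(r,y)\le \Err_T(r,y)+\lambda_T
\qquad\text{and}\qquad
\Cal_T(r,y)\ge \bigl(\Err_T(r,y)-\tau_T\bigr)_+\ge \Err_T(r,y)-\tau_T.
\]
The upper bound uses $(x-\tau_T)_+\le x$ for $x\ge0$, which holds because $\Err_T\ge0$ and $\tau_T\ge0$. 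The lower bound drops the non-negative squared-loss term. Taking expectations under any $Z\sim\cD^T$ then gives
\[
\E_{\cD^T}[\Err_T]-\tau_T\le \E_{\cD^T}[\Cal_T]\le \E_{\cD^T}[\Err_T]+\lambda_T .
\]
All expectations are well defined in $[0,\infty]$ because everything is non-negative. The additive constant $\lambda_T$ is finite, so even an infinite expectation is handled consistently.

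For completeness, fix a calibrated $\cD$. The right inequality together with $\Cal_T\ge0$ gives
\[
0\le \E_{\cD^T}[\Cal_T]\le \E_{\cD^T}[\Err_T]+\lambda_T .
\]
The first term on the right tends to $0$ by completeness of $\Err$, and $\lambda_T\to0$ by assumption. So the limit of $\E_{\cD^T}[\Cal_T]$ is $0$.

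For soundness, fix a miscalibrated $\cD$. The left inequality gives
\[
\liminf_{T\to\infty}\E_{\cD^T}[\Cal_T]\ge \liminf_{T\to\infty}\E_{\cD^T}[\Err_T]-\lim_{T\to\infty}\tau_T=\liminf_{T\to\infty}\E_{\cD^T}[\Err_T]>0 .
\]
Here I use that $\liminf(a_T-b_T)\ge\liminf a_T-\lim b_T$ when $b_T$ converges. That liminf is strictly positive by soundness of $\Err$.

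I do not expect a real obstacle; the argument is purely pointwise. The step needing the most care is the upper bound $(x-\tau_T)_+\le x$, which is exactly where non-negativity of $\Err_T$ and of $\tau_T$ is needed. The other point to check is the bound on the squared-loss term: it stays below $\lambda_T$ because reports lie in $[0,1]$ and outcomes lie in $\{0,1\}$. Setting $\lambda_T=0$ recovers the additive construction, as the surrounding text notes.
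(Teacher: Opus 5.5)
Your proof is correct and follows the paper's argument: completeness uses $(\Err_T-\tau_T)_+\le\Err_T$, and soundness uses $(x-\tau_T)_+\ge x-\tau_T$ after dropping the non-negative squared-loss term, then passes to the $\liminf$. The only difference is cosmetic: you bound the squared-loss term pointwise by $1$, whereas the paper uses calibration of $\cD$ to get $\E[R(1-R)]\le 1/4$, and either bound suffices because $\lambda_T\to0$.
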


\begin{proof}
We first prove completeness.  Fix a calibrated one-sample distribution
\(\cD\) over \([0,1]\times\{0,1\}\), and draw
\[
Z=((R_1,Y_1),\ldots,(R_T,Y_T))\sim\cD^T.
\]
Since \(\tau_T\ge0\),
\[
\bigl(\Err_T(Z)-\tau_T\bigr)_+\le \Err_T(Z).
\]
Moreover, calibration of \(\cD\) implies
\[
\E[(R-Y)^2]
=
\E\!\left[\E[(R-Y)^2\mid R]\right]
=
\E[R(1-R)]
\le \frac14.
\]
Therefore
\[
\E_{Z\sim\cD^T}[\Cal_T(Z)]
\le
\E_{Z\sim\cD^T}[\Err_T(Z)]
+\frac{\lambda_T}{4}.
\]
If \(\Err\) is complete, the first term tends to zero for every calibrated
\(\cD\), and the second term tends to zero by assumption.  Hence
\(\Cal\) is complete.

We next prove soundness.  Fix a miscalibrated one-sample distribution
\(\cD\).  For every \(x\ge0\) and every \(\tau\ge0\),
\((x-\tau)_+\ge x-\tau\).  Since the squared-loss term is non-negative,
\[
\E_{Z\sim\cD^T}[\Cal_T(Z)]
\ge
\E_{Z\sim\cD^T}[\Err_T(Z)]-\tau_T.
\]
If \(\Err\) is sound, then
\[
\liminf_{T\to\infty}\E_{Z\sim\cD^T}[\Err_T(Z)]>0.
\]
Because \(\tau_T\to0\), the same positive lower limit is retained by
\(\Cal\).  Thus \(\Cal\) is sound.
\end{proof}





\section*{AI Disclosure}
The authors used ChatGPT 5.4 to assist in developing and editing the proofs of the peeling lemma and the comparison lemma in Section~\ref{sec:weak-soundness-appendix}. The authors independently
verified the correctness and originality of all content, including all references. 
\bibliographystyle{plainnat}
\bibliography{ref}
\appendix
\section{Truthfulness and Weak Soundness}
\label{sec:weak-soundness-appendix}

This appendix proves the truthfulness impossibility result
deferred from the main text. We work with the weak
constant-predictor completeness and soundness requirement considered by
\citet{haghtalab2024truthfulness}: constant reports should have small
expected error when they match the Bernoulli mean and noticeably larger
expected error when they are separated from it. Our goal is to show that
no truthful calibration measure can satisfy this requirement.\\

The proof proceeds by first establishing a comparison lemma for constant
predictors. We show that truthfulness forces the expected error of one
constant report under a Bernoulli product distribution to be controlled
by the expected error of another constant report under the same
distribution. The main technical step is a first-round decomposition: we
peel off the first round, identify its contribution with a proper binary
loss, and show that the \emph{tail} that remains still defines a
non-negative truthful calibration measure.\\

We then apply this comparison lemma to the weak constant-predictor soundness and completeness requirement. Completeness says that a truthful
constant predictor must have small expected error, while the comparison
lemma transfers this upper bound to a separated constant misreport. This
contradicts the lower bound required by soundness, and
therefore no truthful calibration measure can satisfy the weak
constant-predictor requirement.

\subsection{The Comparison Lemma}
\label{sec:comparison-proof}

The key step is a comparison inequality for constant reports.

\subsection*{Notation}
We first define some notation for arguing about first round
decompositions.  We call rounds $2, \ldots, T$ the tail. A
\emph{tail outcome sequence} is an element
$w=(w_1,\ldots,w_{T-1})\in \{0,1\}^{T-1}.$ For $i=1, \ldots, T-1,$
write $w_{<i}=(w_1,\ldots,w_{i-1}).$

A \emph{tail strategy} is a sequence
\[
\sigma=(\sigma_1,\ldots,\sigma_{T-1}),
\qquad
\sigma_i:\{0,1\}^{i-1}\to[0,1].
\]
Given a tail outcome sequence $w,$ the \emph{realized tail report
vector} generated by $\sigma$ along the tail outcome sequence $w$ is given by
\[
\sigma[w]
:=
\bigl(
\sigma_1(\emptyset),
\sigma_2(w_1),
\ldots,
\sigma_{T-1}(w_{<T-1})
\bigr)
\in[0,1]^{T-1}.
\]

Thus $\sigma$ is a rule for producing reports, while $\sigma[w]$ is the
report vector produced when the tail outcome sequence is $w.$

If $\nu$ is a distribution on $\{0,1\}^{T-1},$ let
$W=(W_1,\ldots,W_{T-1})\sim\nu$. We write $\pi^\nu$ for the truthful
tail strategy under $\nu$. That is, for each
$i=1,\ldots,T-1$ and each history $w_{<i}$ with
$\nu(W_{<i}=w_{<i})>0$, we define
\[
\pi^\nu_i(w_{<i})
=
\Pr_\nu[W_i=1\mid W_{<i}=w_{<i}].
\]
On histories $w_{<i}$ with $\nu(W_{<i}=w_{<i})=0$, we set
$\pi^\nu_i(w_{<i})=0$.  We write $\pi^\nu[w]$ for the realized tail
report vector generated by $\pi^\nu$ along $w$.

Finally, for \(r\in[0,1]\), \(s\in[0,1]^{T-1}\),
\(b\in\{0,1\}\), and \(w\in\{0,1\}^{T-1}\), we use the abbreviation
\[
\Cal_T(r,s;b,w)
:=
\Cal_T\bigl((r,s),(b,w)\bigr).
\]
Here $r$ is the first-round report, \(s\) is the vector of reports in
rounds $2,\ldots,T$, $b$ is the first outcome, and $w$ is the vector
of outcomes in rounds $2,\ldots,T$. Since
\[
\{0,1\}^{T-1}\subseteq[0,1]^{T-1},
\]
when $z\in\{0,1\}^{T-1}$ is a deterministic tail outcome sequence, we may also use $z$ as the
corresponding deterministic tail report vector.

The first-round decomposition can be characterized by a peeling lemma. The main idea is this: once the tail distribution is fixed, the first round behaves like a proper binary loss. The remainder of the objective is still a non-negative truthful calibration measure on the tail sequence.

\begin{lemma}[Peeling lemma]
\label{lem:peeling}
Let \(T\ge 2\), and let
\[
\Cal_T:[0,1]^T\times\{0,1\}^T\to\mathbb R_{\ge 0}
\]
be truthful. For a distribution \(\nu\) on
\(\{0,1\}^{T-1}\) and \(b\in\{0,1\}\), define
\[
A_b^\nu(r)
=
\E_{Z\sim\nu}
\left[
\Cal_T(r,\pi^\nu[Z];b,Z)
\right].
\]
Then there is a non-negative proper binary loss
\[
\ell:[0,1]\times\{0,1\}\to\mathbb R_{\ge0}
\]
and constants \(B_b(\nu)\) such that
\[
A_b^\nu(r)=\ell(r,b)+B_b(\nu)
\qquad\text{for all } r\in(0,1).
\]
Moreover, \(\ell\) may be chosen so that for every deterministic tail
sequence \(z\in\{0,1\}^{T-1}\), there are constants
\(\beta_b(z)\ge0\) satisfying
\[
\Cal_T(r,z;b,z)=\ell(r,b)+\beta_b(z)
\qquad\text{for all } r\in(0,1).
\]

Finally, for every \(b\in\{0,1\}\) and every \(r\in(0,1)\), define
\[
R_{b,r}(s;z)
:=
\Cal_T(r,s;b,z)-\Cal_T(r,z;b,z),
\qquad
s\in[0,1]^{T-1},\ z\in\{0,1\}^{T-1}.
\]
Then \(R_{b,r}(s;z)\ge0\) for every \(s,z\). Moreover, \(R_{b,r}\) is
truthful on the remaining \(T-1\) rounds: for every distribution
\(\Lambda\) on \(\{0,1\}^{T-1}\) and every tail strategy \(\sigma\),
\[
\E_{Z\sim\Lambda}
\left[
R_{b,r}(\pi^\Lambda[Z];Z)
\right]
\le
\E_{Z\sim\Lambda}
\left[
R_{b,r}(\sigma[Z];Z)
\right].
\]
\end{lemma}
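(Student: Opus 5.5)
The plan is to probe truthfulness of \(\Cal_T\) with joint distributions whose first outcome is \(\Ber(p)\) and whose tail distribution may depend on the first outcome. Fix two tail distributions \(\nu_1,\nu_0\) on \(\{0,1\}^{T-1}\). Let \(\Pi\) draw \(Y_1\sim\Ber(p)\) with \(p\in(0,1)\), and then draw the tail from \(\nu_{Y_1}\). The truthful strategy reports \(p\) in round one and then follows \(\pi^{\nu_{Y_1}}\).

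Consider the deviation that reports \(r\) in round one and stays truthful afterwards. Truthfulness gives
\[
pA_1^{\nu_1}(p)+(1-p)A_0^{\nu_0}(p)\le pA_1^{\nu_1}(r)+(1-p)A_0^{\nu_0}(r)
\qquad\text{for all } p,r\in(0,1).
\]
So for every pair \((\nu_1,\nu_0)\), the pair \(\bigl(A_0^{\nu_0},A_1^{\nu_1}\bigr)\) is a proper binary loss on \((0,1)\). The key point is that the two components may come from different tail distributions.

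The main step is a rigidity fact: in a proper loss, each component determines the other up to an additive constant on \((0,1)\). To prove it, I would write the properness inequality at \(p=r\) and at \(p=r'\) for \(r<r'\). This yields monotonicity: \(\ell_1\) is nonincreasing and \(\ell_0\) is nondecreasing. It also yields the sandwich
\[
\tfrac{r}{1-r}\bigl(\ell_1(r)-\ell_1(r')\bigr)\le \ell_0(r')-\ell_0(r)\le \tfrac{r'}{1-r'}\bigl(\ell_1(r)-\ell_1(r')\bigr).
\]
Summing over a fine partition of \([r,r']\), the upper and lower sums converge to the Stieltjes integral \(\int_r^{r'}\tfrac{x}{1-x}\,d(-\ell_1)(x)\). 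Their difference is controlled because \(x/(1-x)\) is uniformly continuous on \([r,r']\) and \(\ell_1\) has bounded variation there. Hence the increments of \(\ell_0\) are determined by \(\ell_1\), and symmetrically the increments of \(\ell_1\) are determined by \(\ell_0\). I expect this to be the main technical obstacle, since no differentiability is available.

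Applying rigidity, \(A_1^{\nu}\) pairs properly with both \(A_0^{\nu'}\) and \(A_0^{\nu''}\), so \(A_0^{\nu'}-A_0^{\nu''}\) is constant in \(r\). Likewise \(A_1^{\nu}-A_1^{\nu'}\) is constant in \(r\). Now fix a reference deterministic tail \(z_0\). For deterministic \(z\) we have \(\pi^{\delta_z}[z]=z\), so \(A_b^{\delta_z}(r)=\Cal_T(r,z;b,z)\). Define
\[
\ell(r,b):=A_b^{\delta_{z_0}}(r)-\inf_{r'\in(0,1)}A_b^{\delta_{z_0}}(r').
\]
The infimum is finite because \(\Cal_T\ge0\). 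This \(\ell\) is non-negative and proper, since properness is preserved by subtracting a constant from each component. The constants \(B_b(\nu)\) come from rigidity. For deterministic \(z\), the constant is \(\beta_b(z)=\inf_{r'}A_b^{\delta_z}(r')\ge0\), because \(A_b^{\delta_z}\) and \(A_b^{\delta_{z_0}}\) differ by a constant in \(r\), so subtracting the two infima is consistent.

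For the remainder \(R_{b,r}\), fix \(r\in(0,1)\), \(b\in\{0,1\}\) and a tail distribution \(\Lambda\). Let \(Y_1\sim\Ber(r)\), let the tail be drawn from \(\Lambda\) when \(Y_1=b\), and let it be drawn from an arbitrary \(\nu'\) otherwise. The truthful first report is then exactly \(r\). Consider the deviation that is truthful everywhere except that it follows \(\sigma\) on the branch \(Y_1=b\). This branch has positive probability because \(r\in(0,1)\). Truthfulness then gives
\[
\E_\Lambda\bigl[\Cal_T(r,\pi^\Lambda[Z];b,Z)\bigr]\le \E_\Lambda\bigl[\Cal_T(r,\sigma[Z];b,Z)\bigr].
\]
Subtracting \(\E_\Lambda[\Cal_T(r,Z;b,Z)]\), which is the same on both sides, gives the truthfulness of \(R_{b,r}\).

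Finally, take \(\Lambda=\delta_z\) and let \(\sigma\) be the strategy that plays \(s\) along \(z\). Then \(\pi^{\delta_z}[z]=z\), and the inequality above reads \(\Cal_T(r,z;b,z)\le\Cal_T(r,s;b,z)\), which is exactly \(R_{b,r}(s;z)\ge0\).
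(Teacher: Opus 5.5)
Your proof is correct apart from one endpoint detail, noted below, and it follows the paper's overall structure. Both proofs use the same two-branch probe to show that every pair $(A_0^{\nu_0},A_1^{\nu_1})$ is proper even when the two tails differ. Both then have a rigidity step, a normalization, and the same branch embedding for $R_{b,r}$.

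The real difference is the rigidity step. The paper cites the Savage representation of regular proper scoring rules and treats ``two losses with a common partner differ by a constant'' as a standard consequence. You derive it directly from the two-point inequalities: monotonicity plus the sandwich identify $\ell_0(r')-\ell_0(r)$ with $\int_r^{r'}\frac{x}{1-x}\,d(-\ell_1)(x)$, which is the integrated form of $d\ell_0=-\frac{x}{1-x}\,d\ell_1$. Your version is self-contained and uses only finiteness of the $A_b^\nu$; the paper's is shorter but relies on the representation theorem. Getting $R_{b,r}\ge0$ as the case $\Lambda=\delta_z$ of residual truthfulness is also a small economy, since the paper runs the embedding twice.

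Your normalization subtracts $\inf_{(0,1)}$ of a reference slice, which makes $\beta_b(z)\ge0$ immediate. The paper instead takes $\ell(\cdot,b):=\Cal_T(\cdot,z_b^\star;b,z_b^\star)$ for the deterministic tail with the smallest additive constant.

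That normalization leaves a loose end at the endpoints. The statement asks for $\ell\ge0$ on all of $[0,1]$. Writing $A_b$ for $A_b^{\delta_{z_0}}$, properness at $p=0$ gives $A_0(0)\le\inf_{(0,1)}A_0$, so a priori your $\ell(0,0)\le0$; likewise $\ell(1,1)\le0$. You also stated properness only for $p,r\in(0,1)$.

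Here is one way to close it:
\begin{itemize}
\item Extend properness to $[0,1]$. Your probe works verbatim for $p,r\in[0,1]$; when $p\in\{0,1\}$ one branch is simply null.
\item Rule out a jump at $0$. Comparing $p\in(0,1)$ with the report $0$ gives $(1-p)\bigl(A_0(p)-A_0(0)\bigr)\le pA_1(0)\to0$, so $A_0(0)=\inf_{(0,1)}A_0$.
\item Do the same at $1$ symmetrically, giving $A_1(1)=\inf_{(0,1)}A_1$.
\item Monotonicity on $[0,1]$ then gives $\ell(1,0)\ge0$ and $\ell(0,1)\ge0$.
\end{itemize}
Alternatively, switch to the paper's normalization, which is non-negative by construction.

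Nothing downstream depends on this, since the comparison lemma only evaluates $\ell$ at interior points.
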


\begin{proof}
Fix two distributions \(\nu_0,\nu_1\) on \(\{0,1\}^{T-1}\). For
\(\alpha\in[0,1]\), consider the joint distribution under which
\(Y_1\sim\Ber(\alpha)\), and conditional on \(Y_1=b\), the tail
\((Y_2,\ldots,Y_T)\) is drawn from \(\nu_b\). Under this joint
distribution, the truthful first-round prediction is \(\alpha\). After
observing \(Y_1=b\), the truthful continuation strategy is
\(\pi^{\nu_b}\). Therefore truthfulness implies that \(r=\alpha\)
minimizes
\[
(1-\alpha)A_0^{\nu_0}(r)+\alpha A_1^{\nu_1}(r)
\]
over \(r\in[0,1]\). 

Thus, for every pair \((\nu_0,\nu_1)\), the pair $(A_0^{\nu_0},A_1^{\nu_1})$ is a proper binary loss. Moreover, because $\Cal_T$ is finite-valued and the tail outcome space is finite,
every $A_b^\nu$ is real-valued on $[0,1].$ Hence these induced losses are regular,
and the Savage representation applies \citep{gneiting2007strictly}.

We use the following standard consequence of this representation. If \(\mathcal F_0\) and \(\mathcal F_1\) are two families of functions such that every pair \((f,g)\in\mathcal F_0\times\mathcal F_1\) is a proper binary loss, then all functions in \(\mathcal F_0\) differ by additive constants, and likewise for $\mathcal F_1$. Indeed, fixing \(g\in\mathcal F_1\) and applying the representation to \((f,g)\) and \((\widetilde f,g)\) shows that \(f-\widetilde f\) is constant on \((0,1)\). The symmetric
argument applies to \(\mathcal F_1\). 

Applying this consequence to the
families \(\{A_0^\nu\}_\nu\) and \(\{A_1^\nu\}_\nu\), we obtain a proper
binary loss \(\ell\) and constants \(B_b(\nu)\) such that
\[
A_b^\nu(r)=\ell(r,b)+B_b(\nu)
\qquad\text{for all }r\in(0,1).
\]

We now choose the normalization of \(\ell\). For a deterministic tail
sequence \(z\), the truthful tail report vector under \(\delta_z\) is
\(z\). Hence
\[
r\mapsto \Cal_T(r,z;b,z)
\]
is the special case \(A_b^{\delta_z}(r)\). Therefore each such function
differs from \(\ell(r,b)\) by an additive constant. 

Since there are finitely many deterministic tail sequences, for each \(b\in\{0,1\}\) we
may choose a deterministic tail \(z_b^\star\) whose additive constant is
minimal, and set
\[
\ell(r,b):=\Cal_T(r,z_b^\star;b,z_b^\star).
\]
This choice preserves propriety, since every pair of deterministic-tail
functions gives a proper binary loss. It also makes \(\ell\) non-negative,
because \(\Cal_T\) is non-negative.

After this normalization, we redefine the constants \(B_b(\nu)\)
accordingly, so that
\[
A_b^\nu(r)=\ell(r,b)+B_b(\nu)
\]
continues to hold.

For every deterministic tail \(z\), we then have
\[
\Cal_T(r,z;b,z)=\ell(r,b)+\beta_b(z)
\]
for some constant \(\beta_b(z)\ge0\).

It remains to prove the residual statement. Fix \(b\in\{0,1\}\),
\(r\in(0,1)\), and a deterministic tail sequence \(z\). Embed this in a
joint distribution where \(Y_1\sim\Ber(r)\) and, conditional on
\(Y_1=b\), the tail is deterministically equal to \(z\). Since
\(r\in(0,1)\), the event \(Y_1=b\) has positive probability. On this
event, the truthful tail report vector is \(z\).

Suppose for contradiction that some \(s\in[0,1]^{T-1}\) satisfies
\[
\Cal_T(r,s;b,z)<\Cal_T(r,z;b,z).
\]

Then we could define a full strategy that reports \(r\)
in the first round, reports truthfully off the branch \(Y_1=b\), and on
the deterministic branch \((Y_1,Y_2,\ldots,Y_T)=(b,z)\) reports the tail
vector \(s\). Since the tail is deterministic on this branch, the fixed vector \(s\)
can be implemented by specifying the corresponding report on each prefix
of \(z\).
 
Since this branch has positive probability, this strategy would strictly improve the
unconditional expected calibration error, contradicting truthfulness.
Therefore
\[
R_{b,r}(s;z)\ge0
\]
for every \(s,z\).

Finally, let \(\Lambda\) be an arbitrary distribution on
\(\{0,1\}^{T-1}\). Embed \(\Lambda\) after the branch \(Y_1=b\) in a full
joint distribution with \(\Pr[Y_1=1]=r\), and define the other branch
arbitrarily. On the branch \(Y_1=b\), the truthful tail strategy is
\(\pi^\Lambda\). If some tail strategy \(\sigma\) satisfied
\[
\E_{Z\sim\Lambda}
\left[
R_{b,r}(\sigma[Z];Z)
\right]
<
\E_{Z\sim\Lambda}
\left[
R_{b,r}(\pi^\Lambda[Z];Z)
\right],
\]
then we could define a full strategy that reports \(r\)
in the first round, reports truthfully off the branch \(Y_1=b\), and uses
the tail strategy \(\sigma\) after the branch \(Y_1=b\). This strategy
would strictly improve the expected value of \(\Cal_T\) under the full
joint distribution.

This contradicts truthfulness.
Therefore \(R_{b,r}\) is truthful on the remaining \(T-1\)
rounds.
\end{proof}

We can now prove the comparison lemma.  This is the technical step that
turns full truthfulness into a quantitative restriction on constant
misreports, which is the ingredient needed for the weak-soundness
impossibility below.

\begin{lemma}[Comparison lemma]
\label{lem:constant-comparison}
Let
\[
\Cal_T:[0,1]^T\times\{0,1\}^T\to \mathbb R_{\ge0}
\]
be truthful. Then for every \(0<q<p<1\),
\[
\E_{Y\sim\Ber(p)^{\otimes T}}
\left[
\Cal_T(q,\ldots,q;Y)
\right]
\le
\frac{p(1-q)}{q(1-p)}
\E_{Y\sim\Ber(p)^{\otimes T}}
\left[
\Cal_T(p,\ldots,p;Y)
\right].
\]
\end{lemma}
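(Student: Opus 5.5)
The plan is to prove the bound by induction on $T$, with the induction hypothesis stated for \emph{every} non-negative truthful calibration measure on $T$ rounds. Write $\kappa:=\frac{p(1-q)}{q(1-p)}\ge1$. The reason for this form of hypothesis is that \Cref{lem:peeling} turns a truthful measure on $T$ rounds into a non-negative truthful measure $R_{b,r}$ on the remaining $T-1$ rounds, and the hypothesis can then be applied to that tail measure.

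The engine of the induction is a one-round inequality. For any non-negative proper binary loss $\ell$,
\[
(1-p)\ell(q,0)+p\ell(q,1)\le\kappa\bigl[(1-p)\ell(p,0)+p\ell(p,1)\bigr].
\]
To prove it, I first bound the left side termwise by $\frac pq\bigl[(1-q)\ell(q,0)+q\ell(q,1)\bigr]$, using $p/q\ge1$, $1-p\le1-q$ and $\ell\ge0$. Next I apply propriety at $q$ to replace $\ell(q,\cdot)$ by $\ell(p,\cdot)$. This gives $\frac pq(1-q)\ell(p,0)+p\ell(p,1)$, which is at most $\kappa(1-p)\ell(p,0)+\kappa p\ell(p,1)$ because $\kappa\ge1$. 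The case $T=1$ is exactly this inequality, since $\Cal_1(\cdot;b)$ is itself a non-negative proper loss by truthfulness under $\Ber(\alpha)$.

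For $T\ge2$, I apply \Cref{lem:peeling} with $r=q$ and write $Y=(b,W)$ with $W\sim\Ber(p)^{\otimes(T-1)}$. For every $s$,
\[
\Cal_T(q,s;b,W)=\ell(q,b)+\beta_b(W)+R_{b,q}(s;W),
\]
with $\beta_b\ge0$ and $R_{b,q}\ge0$ truthful on $T-1$ rounds. Taking $s=(q,\ldots,q)$, applying the induction hypothesis to $R_{b,q}$ for each fixed $b$, and applying the one-round inequality to the $\ell$ term gives
\[
\E[\Cal_T(q,\ldots,q;Y)]\le\kappa\E[\ell(p,Y_1)]+\E[\beta_{Y_1}(W)]+\kappa\E[R_{Y_1,q}(p,\ldots,p;W)].
\]
Since $\kappa\ge1$ and $\beta\ge0$, I then bound this by
\[
\kappa\Bigl(\E[\ell(p,Y_1)]+\E\bigl[\Cal_T(q,p,\ldots,p;Y)\bigr]-\E[\ell(q,Y_1)]\Bigr),
\]
where the peeling decomposition with $s=(p,\ldots,p)$ recombines the $\beta$ and $R$ terms.

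The last step identifies the bracket with $\E[\Cal_T(p,\ldots,p;Y)]$. I take $\nu=\Ber(p)^{\otimes(T-1)}$ in \Cref{lem:peeling}. Its truthful tail strategy is the constant report $p$, so $A_b^\nu(r)=\E[\Cal_T(r,p,\ldots,p;b,W)]=\ell(r,b)+B_b(\nu)$ for all $r\in(0,1)$. Consequently
\[
\E[\Cal_T(q,p,\ldots,p;Y)]-\E[\ell(q,Y_1)]=\E[B_{Y_1}(\nu)]=\E[\Cal_T(p,\ldots,p;Y)]-\E[\ell(p,Y_1)],
\]
and the bracket equals $\E[\Cal_T(p,\ldots,p;Y)]$, which closes the induction.

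I expect the main obstacle to be the bookkeeping that keeps the factor $\kappa$ from compounding across rounds. A naive change of measure, or a direct appeal to truthfulness under $\Ber(q)^{\otimes T}$, loses a factor like $(p/q)^T$. The plan above avoids this in two ways. The $\kappa$ from the induction multiplies only the residual term $R$. The first-round $\ell$ and the nonnegative constants $\beta$ are then absorbed into the single factor $\kappa$ via $\kappa\ge1$, and the $r$-independence of $A_b^\nu-\ell$ for the constant-$p$ tail transfers the comparison back to the fully truthful report vector. I will need to check carefully that the same $\ell$ and the same convention for $\beta$ from \Cref{lem:peeling} are used in both decompositions, one with $s=(q,\ldots,q)$ and one with $s=(p,\ldots,p)$.
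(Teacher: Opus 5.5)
Your proposal is correct and follows the paper's proof step for step. It runs induction on $T$ via \Cref{lem:peeling} with $\nu=\Ber(p)^{\otimes(T-1)}$, applies the induction hypothesis to the non-negative truthful residual $R_{b,q}$, absorbs the $\beta_b$ terms using $\kappa\ge1$ (the paper instead discards $-(K-1)\E[\beta_b(Z)]\le 0$), and uses $\ell(p,b)+B_b(\nu)=A_b^\nu(p)$ to return to the truthful constant report. Your one addition is an explicit proof of the one-round inequality, which the paper calls standard, and that argument is valid: termwise bound by $\tfrac{p}{q}$, propriety at $q$, then $\tfrac{p}{q}(1-q)=\kappa(1-p)$ and $\kappa\ge1$.
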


\begin{proof}
We prove the claim by induction on \(T\). The base case \(T=1\) is the
standard comparison inequality for non-negative proper binary losses: if
\(\ell\) is a non-negative proper binary loss and \(0<q<p<1\), then
\[
(1-p)\ell(q,0)+p\ell(q,1)
\le
\frac{p(1-q)}{q(1-p)}
\bigl((1-p)\ell(p,0)+p\ell(p,1)\bigr).
\]

Assume the claim holds for horizon \(T-1\). Fix \(0<q<p<1\), and set
\[
K:=\frac{p(1-q)}{q(1-p)}.
\]
Let $\nu=\Ber(p)^{\otimes(T-1)}$ be the product distribution on the last \(T-1\) outcomes. Under \(\nu\), the
truthful tail report vector is \((p,\ldots,p)\).

Apply \Cref{lem:peeling}. For \(b\in\{0,1\}\), \(r\in(0,1)\), and
deterministic \(z\in\{0,1\}^{T-1}\), we have
\[
\Cal_T(r,z;b,z)=\ell(r,b)+\beta_b(z),
\]
with \(\beta_b(z)\ge0\), and hence
\[
R_{b,r}(s;z)
=
\Cal_T(r,s;b,z)-\ell(r,b)-\beta_b(z).
\]

Fix \(b\in\{0,1\}\). Since \(R_{b,q}\) is non-negative and truthful on
the tail, the induction hypothesis applied to \(R_{b,q}\)
gives
\[
\E_{Z\sim\nu}
\left[
R_{b,q}(q,\ldots,q;Z)
\right]
\le
K
\E_{Z\sim\nu}
\left[
R_{b,q}(p,\ldots,p;Z)
\right].
\]
Therefore
\begin{align*}
\E_{Z\sim\nu}
\left[
\Cal_T(q,q,\ldots,q;b,Z)
\right]
& =
\ell(q,b)
+
\E_{Z\sim\nu}[\beta_b(Z)]
+
\E_{Z\sim\nu}
\left[
R_{b,q}(q,\ldots,q;Z)
\right] \\
&\le
\ell(q,b)
+
\E_{Z\sim\nu}[\beta_b(Z)]
+
K
\E_{Z\sim\nu}
\left[
R_{b,q}(p,\ldots,p;Z)
\right].
\end{align*}
We now compute
\begin{align*}
\E_{Z\sim\nu}
\left[
R_{b,q}(p,\ldots,p;Z)
\right]
&=
\E_{Z\sim\nu}
\left[
\Cal_T(q,p,\ldots,p;b,Z)-\ell(q,b)-\beta_b(Z)
\right] \\
&=
A_b^\nu(q)-\ell(q,b)-\E_{Z\sim\nu}[\beta_b(Z)] \\
&=
B_b(\nu)-\E_{Z\sim\nu}[\beta_b(Z)].
\end{align*}
Substituting this into the previous inequality gives
\begin{align*}
\E_{Z\sim\nu}
\left[
\Cal_T(q,q,\ldots,q;b,Z)
\right]
&\le
\ell(q,b)
+
\E_{Z\sim\nu}[\beta_b(Z)]
+
K\left(B_b(\nu)-\E_{Z\sim\nu}[\beta_b(Z)]\right) \\
&=
\ell(q,b)
+
K B_b(\nu)
-
(K-1)\E_{Z\sim\nu}[\beta_b(Z)] \\
&\le
\ell(q,b)+K B_b(\nu),
\end{align*}
where the last step uses \(K>1\) and \(\beta_b(Z)\ge0\).

Averaging over \(b\sim\Ber(p)\), we obtain
\[
\E_{Y\sim\Ber(p)^{\otimes T}}
\left[
\Cal_T(q,\ldots,q;Y)
\right]
\le
\E_{b\sim\Ber(p)}[\ell(q,b)]
+
K\E_{b\sim\Ber(p)}[B_b(\nu)].
\]
By the one-step comparison inequality for the non-negative proper loss
\(\ell\),
\[
\E_{b\sim\Ber(p)}[\ell(q,b)]
\le
K\E_{b\sim\Ber(p)}[\ell(p,b)].
\]
Hence
\[
\E_{Y\sim\Ber(p)^{\otimes T}}
\left[
\Cal_T(q,\ldots,q;Y)
\right]
\le
K\E_{b\sim\Ber(p)}
\left[
\ell(p,b)+B_b(\nu)
\right].
\]
Using the peeling identity once more,
\[
\ell(p,b)+B_b(\nu)
=
A_b^\nu(p)
=
\E_{Z\sim\nu}
\left[
\Cal_T(p,p,\ldots,p;b,Z)
\right].
\]
Therefore
\[
\E_{b\sim\Ber(p)}
\left[
\ell(p,b)+B_b(\nu)
\right]
=
\E_{Y\sim\Ber(p)^{\otimes T}}
\left[
\Cal_T(p,\ldots,p;Y)
\right].
\]
Combining the previous inequalities gives
\[
\E_{Y\sim\Ber(p)^{\otimes T}}
\left[
\Cal_T(q,\ldots,q;Y)
\right]
\le
K
\E_{Y\sim\Ber(p)^{\otimes T}}
\left[
\Cal_T(p,\ldots,p;Y)
\right],
\]
which is the desired claim.
\end{proof}

\subsection{Constant-Predictor Consequence}

The comparison lemma immediately gives a quantitative impossibility result for
truthful calibration measures evaluated on constant predictors. If a
truthful constant report has small expected error, then every fixed
constant misreport below it must also have comparably small expected
error, where the comparison factor is independent of the horizon.

For \(p,\hat p\in(0,1)\), write
\[
M_T(\hat p,p)
:=
\E_{Y\sim\Ber(p)^{\otimes T}}
\left[
\Cal_T(\hat p,\ldots,\hat p;Y)
\right].
\]
By \Cref{lem:constant-comparison}, for every \(0<q<p<1\),
\[
M_T(q,p)
\le
\frac{p(1-q)}{q(1-p)}M_T(p,p).
\]
Consequently, suppose that a truthful calibration measure satisfies the
constant-predictor completeness bound
\[
M_T(p,p)\le c_1(T)
\qquad\text{for every }p\in(0,1).
\]
For any separation \(c\in(0,1)\), set
\[
p^\ast=\frac{1+c}{2}
\qquad\text{and}\qquad
\hat p=\frac{1-c}{2}.
\]
Then \(|\hat p-p^\ast|=c\), and the comparison lemma gives
\[
M_T(\hat p,p^\ast)
\le
\left(\frac{1+c}{1-c}\right)^2c_1(T).
\]
Thus a truthful calibration measure cannot have \(c_1(T)=o(1)\) while
requiring a constant lower bound on the expected error of every constant
misreport separated from the true Bernoulli mean by a fixed amount. The
next subsection formalizes this conclusion as a fixed-horizon impossibility
for weak completeness and soundness.

\subsection{Weak Constant-Predictor Soundness}

We now formalize the consequence of the comparison lemma for soundness.
The point is that the comparison factor
\[
\frac{p(1-q)}{q(1-p)}
\] does not depend on $T$. Therefore, if a truthful calibration measure
gives small expected error to truthful constant reports, then it cannot
simultaneously assign a fixed positive error to all constant reports that
are separated from the truth by a fixed amount. The definition below
formalizes this weak constant-predictor notion of completeness and
soundness.

\begin{definition}[Weak constant-predictor soundness]
Fix parameters $c,c_1,c_2>0$.  A calibration measure $\Cal_T$ satisfies
$(c,c_1,c_2)$-weak constant-predictor completeness and soundness if:
\begin{enumerate}
    \item for every $p\in[0,1]$,
    \[
    \E_{Y\sim\Ber(p)^{\otimes T}}
    \left[
    \Cal_T(p,\ldots,p;Y)
    \right]
    \le c_1;
    \]
    \item for every $\hat p,p^\ast\in[0,1]$ with
    $|\hat p-p^\ast|\ge c$,
    \[
    \E_{Y\sim\Ber(p^\ast)^{\otimes T}}
    \left[
    \Cal_T(\hat p,\ldots,\hat p;Y)
    \right]
    \ge c_2.
    \]
\end{enumerate}
\end{definition}

Applying the comparison lemma to this weak requirement gives the final
impossibility: even this constant-predictor version of completeness and
soundness is incompatible with truthfulness.

\begin{theorem}[Truthfulness vs.\ weak soundness]
\label{thm:constant-predictor-impossibility}
For every $0<c_2\le1$, no truthful calibration measure
\[
\Cal_T:[0,1]^T\times\{0,1\}^T\to\R_{\ge0}
\]
can satisfy $(1/4,c_2/4,c_2)$-weak constant-predictor completeness and
soundness.
\end{theorem}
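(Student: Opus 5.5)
We argue by contradiction, using the comparison lemma (\Cref{lem:constant-comparison}) exactly as in the constant-predictor consequence above. Suppose $\Cal_T$ is truthful and satisfies $(1/4,c_2/4,c_2)$-weak constant-predictor completeness and soundness. We pick a specific separated pair of constant reports. The natural choice is $p^\ast=(1+c)/2$ and $\hat p=(1-c)/2$ with $c=1/4$, so that $p^\ast=5/8$, $\hat p=3/8$, and $|\hat p-p^\ast|=1/4\ge c$. Both values lie in $(0,1)$ and satisfy $\hat p<p^\ast$, so the lemma applies with $q=\hat p$ and $p=p^\ast$.

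The lemma gives the bound
\[
M_T(\hat p,p^\ast)\le \frac{p^\ast(1-\hat p)}{\hat p(1-p^\ast)}\,M_T(p^\ast,p^\ast)
=\left(\frac{1+c}{1-c}\right)^2 M_T(p^\ast,p^\ast)
=\left(\frac{5}{3}\right)^2 M_T(p^\ast,p^\ast).
\]
By completeness, $M_T(p^\ast,p^\ast)\le c_2/4$. Hence
\[
M_T(\hat p,p^\ast)\le \frac{25}{9}\cdot\frac{c_2}{4}=\frac{25}{36}\,c_2<c_2.
\]
Soundness, however, requires $M_T(\hat p,p^\ast)\ge c_2$. These two statements are incompatible whenever $c_2>0$, which gives the contradiction. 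The hypothesis $c_2\le 1$ is not needed for this argument; it only makes the parameter regime meaningful.

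The only step needing care is choosing the pair so that the comparison factor stays below $4$. The symmetric choice around $1/2$ gives $(5/3)^2\approx 2.78$, which is enough. Other separated pairs, such as ones near the boundary, would give a larger factor, so the choice of pair matters. No other obstacle is expected, since the case $T=1$ is also covered by the lemma.
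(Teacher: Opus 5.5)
Your proof is correct and follows essentially the same route as the paper: apply the comparison lemma to one pair of constant reports separated by $1/4$, then play completeness against soundness. The only difference is the pair. The paper takes $q=1/2$, $p=3/4$, with factor $3$, giving $3c_2/4<c_2$. Your symmetric pair $3/8,5/8$ gives the slightly smaller factor $25/9$. Either works, since the factor only needs to be below $4$.
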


\begin{proof}
Suppose such a truthful $\Cal_T$ exists.  Completeness at
$p=3/4$ gives
\[
\E_{Y\sim\Ber(3/4)^{\otimes T}}
\left[
\Cal_T(3/4,\ldots,3/4;Y)
\right]
\le
\frac{c_2}{4}.
\]
Applying \Cref{lem:constant-comparison} with $q=1/2$ and $p=3/4$ gives
the comparison factor
\[
\frac{p(1-q)}{q(1-p)}
=
\frac{(3/4)(1/2)}{(1/2)(1/4)}
=3.
\]
Therefore
\[
\E_{Y\sim\Ber(3/4)^{\otimes T}}
\left[
\Cal_T(1/2,\ldots,1/2;Y)
\right]
\le
\frac{3c_2}{4}
<
c_2.
\]
But $|1/2-3/4|=1/4$, so soundness requires this same expectation to be
at least $c_2$, a contradiction.
\end{proof}

The correlated version of weak soundness only strengthens the product
condition, since it includes i.i.d.\ Bernoulli outcome sequences.  The
corollary records that the same impossibility therefore applies to that formulation as well. Here the correlated condition requires the two bounds in the preceding
definition to hold for every joint distribution on $\{0,1\}^T$ whose coordinates have common marginal mean $p$, without assuming independence.

\begin{definition}[Correlated weak constant-predictor soundness]
\label{def:correlated-weak-constant-soundness}
Fix $c,c_1,c_2>0$. A calibration measure $\Cal_T$ satisfies
$(c,c_1,c_2)$-weak correlated constant-predictor completeness and
soundness if:
\begin{enumerate}
    \item for every $p\in[0,1]$ and every joint distribution $\Pi$ on
    $\{0,1\}^T$ satisfying
    \[
    \E_\Pi[Y_t]=p
    \qquad\text{for every }t\in[T],
    \]
    we have
    \[
    \E_{Y\sim\Pi}
    \left[
    \Cal_T(p,\ldots,p;Y)
    \right]
    \le c_1;
    \]
    \item for every $\hat p,p^\ast\in[0,1]$ with
    $|\hat p-p^\ast|\ge c$ and every joint distribution $\Pi$ on
    $\{0,1\}^T$ satisfying
    \[
    \E_\Pi[Y_t]=p^\ast
    \qquad\text{for every }t\in[T],
    \]
    we have
    \[
    \E_{Y\sim\Pi}
    \left[
    \Cal_T(\hat p,\ldots,\hat p;Y)
    \right]
    \ge c_2.
    \]
\end{enumerate}
\end{definition}

This condition strengthens the product condition because, for every
$p\in[0,1]$, the product distribution $\Ber(p)^{\otimes T}$ is a joint distribution on $\{0,1\}^T$ satisfying
$\E[Y_t]=p$ for every $t\in[T]$.

\begin{corollary}[Correlated weak soundness]
For every $0<c_2\le1$, no truthful calibration measure can satisfy
$(1/4,c_2/4,c_2)$-weak correlated constant-predictor completeness and
soundness in the sense of
\Cref{def:correlated-weak-constant-soundness}.
\end{corollary}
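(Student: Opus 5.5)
The corollary follows by reducing correlated weak soundness to its product special case and then invoking \Cref{thm:constant-predictor-impossibility}. Suppose, for contradiction, that some truthful calibration measure $\Cal_T$ satisfies $(1/4,c_2/4,c_2)$-weak correlated constant-predictor completeness and soundness in the sense of \Cref{def:correlated-weak-constant-soundness}. We will show that $\Cal_T$ then also satisfies the product version, $(1/4,c_2/4,c_2)$-weak constant-predictor completeness and soundness.

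The inclusion is direct. For completeness, fix $p\in[0,1]$ and take $\Pi=\Ber(p)^{\otimes T}$. This is a joint distribution on $\{0,1\}^T$ with $\E_\Pi[Y_t]=p$ for every $t$. The correlated completeness clause applied to this $\Pi$ is then exactly the product completeness clause,
\[
\E_{Y\sim\Ber(p)^{\otimes T}}[\Cal_T(p,\ldots,p;Y)]\le c_2/4.
\]
For soundness, fix $\hat p,p^\ast$ with $|\hat p-p^\ast|\ge 1/4$ and take $\Pi=\Ber(p^\ast)^{\otimes T}$. The correlated soundness clause applied to this $\Pi$ gives the product soundness lower bound $c_2$.

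The hypothesis of truthfulness does not change in this reduction. Truthfulness in the paper's definition is already required for arbitrary joint distributions, so it is the same assumption in both settings. Hence $\Cal_T$ is truthful and satisfies the product condition, which contradicts \Cref{thm:constant-predictor-impossibility}.

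There is no real obstacle here. The only point to check is that the parameters $(c,c_1,c_2)=(1/4,c_2/4,c_2)$ carry over unchanged, and that the product distribution meets the marginal-mean constraint of the correlated definition; both are immediate. If a self-contained argument is preferred, one can instead repeat the proof of \Cref{thm:constant-predictor-impossibility} directly. Correlated completeness at $p=3/4$ with $\Pi=\Ber(3/4)^{\otimes T}$ gives expected error at most $c_2/4$. \Cref{lem:constant-comparison} with $q=1/2$ then bounds the constant misreport $1/2$ by $3c_2/4<c_2$, which contradicts correlated soundness for the same $\Pi$.
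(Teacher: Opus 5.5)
Your proposal is correct and matches the paper's proof. Both restrict the correlated condition to $\Pi=\Ber(p)^{\otimes T}$ to obtain the product weak constant-predictor condition, and then invoke \Cref{thm:constant-predictor-impossibility}.
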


\begin{proof}
Restricting the correlated condition to
$\Pi=\Ber(p)^{\otimes T}$ gives the product weak constant-predictor
condition ruled out by
\Cref{thm:constant-predictor-impossibility}.
\end{proof}

\section{Concentration for Smooth Calibration Error}
\label{sec:smCE-conc}
This section proves \Cref{lem:smCE-uniform-completeness}.  The main input is
the following scale-sensitive tail bound for smooth calibration error under
truthful reporting.

\begin{lemma}[Truthful tail bound for smooth calibration error]
\label{lem:smCE-helper-lemma}
Fix $0<\alpha<1/2$.
Let $\smCE_T$ be the normalized smooth calibration error from
\Cref{sec:preliminaries}.  Suppose \(p_t^\ast\) is
\(\sigma(Y_1,\ldots,Y_{t-1})\)-measurable and, conditional on the past,
\(Y_t\sim\Ber(p_t^\ast)\).  Then, for sufficiently large $T$,
\[
\Pr\!\left[
\smCE_T(p^\ast,Y)\ge T^{-\alpha}
\right]
\le
\exp\!\left(-T^{(1-2\alpha)/2}\right)\cdot
\frac1T
\sum_{t=1}^T
\E[p_t^\ast(1-p_t^\ast)].
\]
\end{lemma}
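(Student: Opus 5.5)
The plan is to reduce $\smCE_T$ to finitely many martingales, control them with a Freedman-type inequality plus a union bound, and then produce the multiplicative factor $\overline V_T=\frac1T\sum_t\E[p_t^\ast(1-p_t^\ast)]$ by splitting on the size of $\overline V_T$.

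\textbf{Step 1 (discretize $\cF$).} Set $\epsilon=T^{-\alpha}/4$ and take an $\epsilon$-net $\cF_\epsilon\subset\cF$ in sup norm. For $1$-Lipschitz functions $[0,1]\to[-1,1]$ such a net exists with $\log|\cF_\epsilon|\le C/\epsilon=4C\,T^{\alpha}$. Since $|y_t-r_t|\le1$, every $f\in\cF$ has a net element $g$ with $\frac1T\bigl|\sum_t(f-g)(r_t)(y_t-r_t)\bigr|\le\epsilon$. Hence the event $\smCE_T(p^\ast,Y)\ge T^{-\alpha}$ implies that some $g\in\cF_\epsilon$ has $|M_T(g)|\ge \tfrac34T^{1-\alpha}$, where
\[
M_T(g)=\sum_{t=1}^T g(p_t^\ast)(Y_t-p_t^\ast).
\]

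\textbf{Step 2 (Freedman bound).} For each fixed $g$, $M_T(g)$ is a martingale in the filtration of $Y_{<t}$. Its increments are bounded by $1$, and its predictable quadratic variation is at most $W:=\sum_t p_t^\ast(1-p_t^\ast)\le T/4$. Freedman's inequality with $x=\tfrac34T^{1-\alpha}$ and $v=T/4$ gives $\Pr[|M_T(g)|\ge x]\le 2\exp(-c\,T^{1-2\alpha})$. A union bound over $\cF_\epsilon$ then gives
\[
\Pr\bigl[\smCE_T(p^\ast,Y)\ge T^{-\alpha}\bigr]\le \exp\bigl(4C\,T^{\alpha}-c\,T^{1-2\alpha}+\log 2\bigr).
\]
This is $\exp(-c'T^{1-2\alpha})$ only when $T^{1-2\alpha}$ dominates $T^\alpha$, i.e.\ $\alpha<1/3$. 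For the full range $\alpha<1/2$ I would instead bound the supremum directly: bound $\E\sup_f M_T(f)$ by chaining, which gives $O(\sqrt{T})$-order behaviour for the Lipschitz class (up to lower-order factors). Then use a bounded-differences or Freedman-type concentration of $\sup_f M_T(f)$ around its mean, giving a deviation bound $\exp(-c\,T^{1-2\alpha})$. Either way, Steps 1--2 should yield an unconditional bound $\Pr[\smCE_T(p^\ast,Y)\ge T^{-\alpha}]\le\exp(-c\,T^{1-2\alpha})$ for large $T$.

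\textbf{Step 3 (large-variance case).} Suppose $\overline V_T\ge\exp(-\tfrac c2T^{1-2\alpha})$. Because $(1-2\alpha)/2<1-2\alpha$, we have
\[
\exp(-c\,T^{1-2\alpha})\le\exp\bigl(-T^{(1-2\alpha)/2}\bigr)\,\overline V_T
\]
for all sufficiently large $T$. So Step 2 already gives the claim in this case.

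\textbf{Step 4 (tiny-variance case; the main obstacle).} Suppose $\overline V_T<\exp(-\tfrac c2T^{1-2\alpha})$. Here the target bound is almost linear in $\overline V_T$, so I would argue through the count of surprises. Since $|f|\le1$, the event forces
\[
\sum_t|Y_t-p_t^\ast|\ge T^{1-\alpha}.
\]
Let $q_t=\min(p_t^\ast,1-p_t^\ast)\le 2p_t^\ast(1-p_t^\ast)$. Write $N$ for the number of rounds in which $Y_t$ takes the less likely value; these are the rounds with $|Y_t-p_t^\ast|\ge\tfrac12$. The remaining rounds contribute at most $\sum_t q_t\le 2W$ in total. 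The event therefore requires $N\ge k:=\tfrac12T^{1-\alpha}$ or $W\ge\tfrac14T^{1-\alpha}$.

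\emph{Bounding $\Pr[N\ge k]$.} I would use factorial moments for predictable Bernoulli indicators. The quantity $\binom{N_t}{j}$ minus its predictable compensator is a martingale, so $\E\binom{N}{k}\le\E\bigl[(\sum_t q_t)^k/k!\bigr]$. Combining this with $\sum_t q_t\le T/2$, Stirling's bound for $k!$, and one factor $\E[\sum_t q_t]\le 2T\overline V_T$ pulled out gives
\[
\Pr[N\ge k]\le (eT/k)^{k-1}\cdot\frac{2T\overline V_T}{k}.
\]
This is not yet exponentially small. I would therefore run the same factorial-moment argument with a truncation at a predictable stopping time where $\sum_t q_t$ first exceeds $k/e^2$, which makes the leading factor $e^{-k}$.

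\emph{Bounding $\Pr[W\ge\tfrac14T^{1-\alpha}]$.} This event must be handled by the same stopping-time argument, since $W$ itself is random. I expect this part, getting a bound that is both exponentially small and linear in $\overline V_T$ when $\overline V_T$ is tiny, to be the hardest step. If it fails, I would fall back to Markov's inequality on $\sum_t|Y_t-p_t^\ast|$ in this regime and re-examine whether $\exp(-c\,T^{1-2\alpha})$ from Step 2 can absorb the resulting factor $T^{\alpha}$.
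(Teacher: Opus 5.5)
Steps 1--3 are essentially fine. For $1/3\le\alpha<1/2$, note that McDiarmid-type bounded differences does not apply, because each $p_t^\ast$ depends on the past. A Dudley-type tail bound built on Azuma--Hoeffding increments does give the unconditional $e^{-cT^{1-2\alpha}}$.

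\textbf{The gap is Step 4, and the route you sketch for it cannot work.} Once you replace $\sum_t f(p_t^\ast)(Y_t-p_t^\ast)$ by $\sum_t|Y_t-p_t^\ast|$, you discard exactly the cancellation that makes the event rare. The events $\{N\ge k\}$ and $\{W\ge\tfrac14T^{1-\alpha}\}$ can then have probability of order $\overline V_T$ itself.

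Here is an example. Take $Y_1\sim\Ber(\epsilon)$ with $\epsilon=e^{-T}$. Let $p_t^\ast=1/2$ for all $t\ge2$ if $Y_1=1$, and $p_t^\ast=0$ for all $t\ge2$ if $Y_1=0$. Then:
\begin{itemize}
\item $\overline V_T\approx\epsilon/4$, which lies deep in your tiny-variance regime;
\item yet $\Pr[N\ge k]=\Pr[W\ge\tfrac14T^{1-\alpha}]=\epsilon\approx4\overline V_T$, far above the target $e^{-T^{(1-2\alpha)/2}}\overline V_T$.
\end{itemize}
No stopping-time truncation can repair this, because the bad event really is that likely. The lemma survives in this example only because, on the branch $Y_1=1$, a large $\smCE_T$ also requires a fair-coin walk to deviate by about $T^{1-\alpha}$. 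That costs a further factor $e^{-\Omega(T^{1-2\alpha})}$.

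Your fallback fails for the same reason. Markov gives $2T^{\alpha}\overline V_T$. Combining it with the unconditional $e^{-cT^{1-2\alpha}}$ yields at best $\min\{e^{-cT^{1-2\alpha}},2T^{\alpha}\overline V_T\}$, which exceeds $e^{-T^{(1-2\alpha)/2}}\overline V_T$ once $\overline V_T$ is tiny.

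\textbf{The missing idea} is a moment bound that is exponential in the deviation and at the same time \emph{linear} in the expected predictable variance. The Freedman/Azuma supermartingale $e^{\lambda N}$ cannot give this, since $\E e^{\lambda N_T}\ge1$ however small the variance is.

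The paper instead tracks $G_j=2(\cosh(\lambda N_j)-1)$, which vanishes at $N_j=0$. Write $X_t=Y_t-p_t^\ast$ and $N_j=\sum_{t\le j}h_tX_t$ with predictable $|h_t|\le1$, and set $q_t=h_t^2p_t^\ast(1-p_t^\ast)$. The Taylor bound $e^{\pm\lambda z}\le1\pm\lambda z+a_\lambda z^2$ on $[-1,1]$, with $a_\lambda=\lambda^2e^\lambda/2$, gives $\E[G_t\mid Y_{<t}]\le(1+a_\lambda)G_{t-1}+2a_\lambda q_t$. Hence $\E[\cosh(\lambda N_T)-1]\le a_\lambda e^{a_\lambda T}\sum_t\E[q_t]$.

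Markov's inequality at level $\cosh(\lambda s/2)-1$ then yields the exponential factor and the factor $\mu=T\overline V_T$ together, with no case split. Choosing $\lambda=T^{-\alpha}/2$ and $s=T^{1-\alpha}$ gives $CT^{1-2\alpha}e^{-T^{1-2\alpha}/16}\mu/T$.

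The paper also avoids nets and chaining. Writing $f(p)=f(0)+\int_0^1g_f(u)\ind[u\le p]\,du$ gives $\sup_{f\in\cF}|\sum_tf(p_t^\ast)X_t|\le|R|+\int_0^1|B(u)|\,du$, where $R=\sum_tX_t$ and $B(u)=\sum_t\ind[u\le p_t^\ast]X_t$. Convexity of $\cosh$ (Jensen over $u$) passes the bound through the integral, using $\int_0^1\nu(u)\,du\le\mu$ with $\nu(u)=\E\sum_t\ind[u\le p_t^\ast]p_t^\ast(1-p_t^\ast)$. This covers all $0<\alpha<1/2$ without the $\alpha<1/3$ barrier that your union bound hits.
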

The proof is a martingale concentration argument based on the
exponential-moment method for bounded martingale differences; see, e.g.,
\citep{freedman1975tail,delapena1999general}.
Under truthful
reporting, $Y_t-p_t^\ast$ is a bounded martingale difference, and smooth
calibration error is the supremum of predictable Lipschitz test functions
applied to these martingale differences.
\begin{proof}
We use the notation
\[
\mathcal F_t=\sigma(Y_1,\ldots,Y_t),\qquad
X_t=Y_t-p_t^\ast,\qquad
\sigma_t^2=p_t^\ast(1-p_t^\ast),\qquad
V_T=\sum_{t=1}^T\sigma_t^2,
\]
and set $\mu:=\E[V_T]$. If $\mu=0$, then $X_t=0$ for every $t$ almost surely,
and there is nothing to prove.
We first record the exponential-moment estimate used below.
Let $h_1,\ldots,h_T$ be predictable with $|h_t|\le 1$, and put
\[
D_t:=h_tX_t,\qquad
N_j:=\sum_{t=1}^jD_t,\qquad
q_t:=h_t^2\sigma_t^2,\qquad
\nu_h:=\sum_{t=1}^T\E[q_t].
\]
Fix $0<\lambda\le 1$ and define
\[
a_\lambda:=\frac{\lambda^2e^\lambda}{2}.
\]
For every $z\in[-1,1]$, Taylor's formula gives
\[
e^{\lambda z}\le 1+\lambda z+a_\lambda z^2,
\qquad
e^{-\lambda z}\le 1-\lambda z+a_\lambda z^2.
\]
Since $\E[D_t\mid\mathcal F_{t-1}]=0$ and
$\E[D_t^2\mid\mathcal F_{t-1}]=q_t$, we get
\[
\E[e^{\pm\lambda D_t}\mid\mathcal F_{t-1}]
\le
1+a_\lambda q_t.
\]
Let
\[
G_j:=e^{\lambda N_j}+e^{-\lambda N_j}-2
=2(\cosh(\lambda N_j)-1).
\]
Conditioning on $\mathcal F_{t-1}$ and using $0\le q_t\le 1$ and
$G_{t-1}\ge 0$,
\begin{align*}
\E[G_t\mid\mathcal F_{t-1}]
&\le
(1+a_\lambda q_t)\left(e^{\lambda N_{t-1}}+e^{-\lambda N_{t-1}}\right)-2\\
&=
G_{t-1}+a_\lambda q_t(G_{t-1}+2)\\
&\le
(1+a_\lambda)G_{t-1}+2a_\lambda q_t.
\end{align*}
Iterating this recursion yields
\begin{equation}
\label{eq:fb-cosh-transform}
\E[\cosh(\lambda N_T)-1]
\le
a_\lambda e^{a_\lambda T}\nu_h.
\end{equation}

Now use the one-dimensional structure of the Lipschitz class. Define
\[
R:=\sum_{t=1}^T X_t,
\qquad
B(u):=\sum_{t=1}^T\ind[u\le p_t^\ast]X_t,\qquad u\in[0,1].
\]
Every $f\in\cF$ is absolutely continuous and admits an a.e. derivative
$g_f$ with $|g_f|\le 1$.  Therefore
\[
\sum_{t=1}^T f(p_t^\ast)X_t
=
f(0)R+\int_0^1 g_f(u)B(u)\,du.
\]
Consequently, with
\[
L:=
\sup_{f\in\cF}\left|\sum_{t=1}^T f(p_t^\ast)X_t\right|,
\]
we have
\[
L\le |R|+\int_0^1|B(u)|\,du.
\]

Apply \eqref{eq:fb-cosh-transform} with $h_t=1$ to get
\[
\E[\cosh(\lambda R)-1]\le a_\lambda e^{a_\lambda T}\mu.
\]
For each fixed $u$, apply \eqref{eq:fb-cosh-transform} with
$h_t(u)=\ind[u\le p_t^\ast]$. If
\[
\nu(u):=\E\left[\sum_{t=1}^T\ind[u\le p_t^\ast]\sigma_t^2\right],
\]
then
\[
\E[\cosh(\lambda B(u))-1]\le a_\lambda e^{a_\lambda T}\nu(u).
\]
Moreover
\[
\int_0^1\nu(u)\,du
=
\E\left[\sum_{t=1}^T p_t^\ast\sigma_t^2\right]
\le \mu.
\]

Let \(s:=T^{1-\alpha}\). If \(L\ge s\), then either \(|R|\ge s/2\) or
$\int_0^1|B(u)|\,du\ge s/2$. The first event is bounded by Markov's
inequality:
\[
\Pr[|R|\ge s/2]
\le
\frac{a_\lambda e^{a_\lambda T}\mu}{\cosh(\lambda s/2)-1}.
\]
For the second event, Jensen's inequality gives
\[
\cosh\left(\lambda\int_0^1|B(u)|\,du\right)-1
\le
\int_0^1\left(\cosh(\lambda |B(u)|)-1\right)\,du
=
\int_0^1\left(\cosh(\lambda B(u))-1\right)\,du.
\]
Therefore another application of Markov's inequality gives
\[
\Pr\left[\int_0^1|B(u)|\,du\ge s/2\right]
\le
\frac{a_\lambda e^{a_\lambda T}\mu}{\cosh(\lambda s/2)-1}.
\]
Combining the two estimates,
\[
\Pr[L\ge s]
\le
\frac{2a_\lambda e^{a_\lambda T}\mu}{\cosh(\lambda s/2)-1}.
\]
Finally choose $\lambda=T^{-\alpha}/2$ and set $d:=1-2\alpha>0$.
For all sufficiently large $T$, we have $e^\lambda\le 3/2$ and hence
\[
a_\lambda\le \frac{3}{16}T^{-2\alpha},
\qquad
a_\lambda T\le \frac{3}{16}T^d,
\qquad
\frac{\lambda s}{2}=\frac14T^d.
\]
Since $\cosh(x)-1\ge c e^x$ for all sufficiently large $x$, it follows that
\[
\Pr[L\ge T^{1-\alpha}]
\le
C\mu T^{-2\alpha}\exp(-T^d/16)
=
C T^d\exp(-T^d/16)\frac{\mu}{T}
\le
\exp(-T^{d/2})\frac{\mu}{T}
\]
for sufficiently large \(T\).  Since $d/2=(1-2\alpha)/2$,
\(\smCE_T(p^\ast,Y)=L/T\) and
$\mu=\sum_{t=1}^T\E[p_t^\ast(1-p_t^\ast)]$, this is the desired bound.
\end{proof}

\begin{proof}[Proof of \Cref{lem:smCE-uniform-completeness}]
Fix $0<\alpha<1/2$.  Choose $T_0=T_0(\alpha)$ so that
\Cref{lem:smCE-helper-lemma} holds for
every $T\ge T_0$.  Let $Y$ have any joint distribution, and let
$p_t^\ast=\E[Y_t\mid Y_{<t}]$.  Because $Y_t$ is binary, conditional on
$Y_{<t}$ it is Bernoulli with parameter $p_t^\ast$, so the tail lemma
applies.

For $T\ge T_0$, the bound $0\le\smCE_T\le1$ gives
\[
\bigl(\smCE_T(p^\ast,Y)-T^{-\alpha}\bigr)_+
\le
\ind\!\left\{\smCE_T(p^\ast,Y)\ge T^{-\alpha}\right\}.
\]
Taking expectations and applying \Cref{lem:smCE-helper-lemma} yields
\[
\E\left[
\bigl(\smCE_T(p^\ast,Y)-T^{-\alpha}\bigr)_+
\right]
\le
\exp\!\left(-T^{(1-2\alpha)/2}\right)\overline V_T.
\]

It remains to control the finitely many horizons $T<T_0$.  Since every
$f\in\cF$ takes values in $[-1,1]$,
\[
\smCE_T(p^\ast,Y)
\le
\frac1T\sum_{t=1}^T|Y_t-p_t^\ast|.
\]
Moreover,
\[
\E\left[|Y_t-p_t^\ast|\mid Y_{<t}\right]
=
2p_t^\ast(1-p_t^\ast).
\]
Therefore
\[
\E\left[
\bigl(\smCE_T(p^\ast,Y)-T^{-\alpha}\bigr)_+
\right]
\le
\E[\smCE_T(p^\ast,Y)]
\le
2\overline V_T.
\]
Thus $\tau_T=T^{-\alpha}$ and the piecewise sequence $\eta_T$ in
\Cref{lem:smCE-uniform-completeness} satisfy
\eqref{eq:uniform-completeness} for every $T$ and every joint distribution
of $Y$.  Both sequences converge to zero, which completes the proof.
\end{proof}

\end{document}